\renewcommand{\subsubsection}[1]{\noindent \textbf{#1}}
\newcommand{\argmin}{\mathop{\rm arg~min}\limits}
\newcommand{\set}[1]{\{ #1 \}}
\newtheorem{theorem}{Theorem}
\newtheorem{lemma}{Lemma}
\newtheorem{definition}{Definition}
\newtheorem{proof}{Proof}
\newcommand{\qed}{\nobreak \ifvmode \relax \else
      \ifdim\lastskip<1.5em \hskip-\lastskip
      \hskip1.5em plus0em minus0.5em \fi \nobreak
      \vrule height0.75em width0.5em depth0.25em\fi}
\begin{document}

\title{Efficient Constrained Pattern Mining Using Dynamic Item Ordering for Explainable Classification}

\author{Hiroaki Iwashita$^1$ \quad Takuya Takagi$^1$ \quad Hirofumi Suzuki$^1$ \\
Keisuke Goto$^1$  \quad Kotaro Ohori$^1$ \quad Hiroki Arimura$^2$ \\
{$^1$ Fujitsu Laboratories Ltd., Japan}\\
\texttt{\{iwashita.hiroak, takagi.takuya, suzuki-hirofumi, } \\
\texttt{goto.keisuke,  ohori.kotaro\}@fujitsu.com}\\
{$^2$ Graduate School of IST, Hokkaido University, Japan}\\
{\texttt{arim@ist.hokudai.ac.jp}}\\}

\date{}

\maketitle

\pagestyle{plain}

\begin{abstract}
Learning of interpretable classification models has been attracting much attention for the last few years. Discovery of succinct and contrasting patterns that can highlight the differences between two classes are very important. Such patterns are useful for human experts, and can be used to construct powerful classifiers. In this paper, we consider mining of minimal emerging patterns from high-dimensional data sets under a variety of constraints in supervised setting. We focus on an extension in which patterns can contain negative items that designate the absence of an item. In such a case, a database becomes highly dense, and it makes mining more challenging since popular pattern mining techniques such as fp-tree and occurrence deliver do not efficiently work. To cope with this difficulty, we present an efficient algorithm for mining minimal emerging patterns by combining two techniques: dynamic variable-ordering during pattern search for enhancing pruning effect, and the use of a pointer-based dynamic data structure, called dancing links, for efficiently maintaining occurrence lists. Experiments on benchmark data sets showed that our algorithm achieves significant speed-ups over emerging pattern mining approach based on LCM, a very fast depth-first frequent itemset miner using static variable-ordering. 
\end{abstract}

%% BODY %%%%%%%%%%%%%%%%%%%%%%%%% 

\section{Introduction}
\label{sec:intro}
  
Machine learning of various classes of interpretable prediction models over combinatorial features, such as decision trees and rule lists~\cite{Angelino_JMLR2018,Lakkaraju_KDD2016}, attracts much attention for last a few years from the view of trustable machine learning and knowledge discovery. 
% Among many classes of combinatorial features, \textit{contrast} and \textit{emerging patterns} are important classes of combinatorial features in high-dimensional data sets~\cite{Dong_KDD1999,Fan_TKDE2006,Loekito_KDD2006}, which are itemsets that discriminate one class from another by capturing significant differences among two classes. 
Among many classes of combinatorial features, \textit{constrained patterns} such as \textit{contrast} and \textit{emerging patterns} are important classes of combinatorial features in high-dimensional data sets~\cite{Dong_KDD1999,Fan_TKDE2006,Loekito_KDD2006}, which are itemsets that discriminate one class from another by capturing significant differences among two classes.  %% 文長い
% They are defined as such patterns that its frequency $Sup_+$ in a positive data set is relatively higher than its frequency $Sup_-$ in a negative data set.
% 
% In Fig.~\ref{fig:domain}, we show examples of these constraints. In the figure, a \textit{contrast pattern} (CP) satisfies the constraints $Sup_+\ge \sigma_+$ and $Sup_- \le \sigma_-$ designated in the rectangular region, while an emerging pattern (EP) satisfies the constraint $gr := (Sup_+/Sup_-) \ge \theta$ in the right half space to the line $Sup_+ = \theta\cdot Sup_-$. In addition, we will consider a class of statistical constraints, called \textit{convex constraints}, such as the chi-square value and mutual information~\cite{Morishita_Sese}. Given a class $\sig C$ of constraint composed of the above constraints, a constrained pattern is \textit{minimal} (or \name{strong})~\cite{Fan_TKDE2006} (e.g., a \textit{strong jumping pattern} for the jumping constraint~\cite{Fan_TKDE2006}) if it satisfies a given constraint $\sig C$ and is minimal among those patterns satisfying $\sig C$. 
%
These classes of patterns are useful to capture high difference in two data sets, to provide human experts interpretable explanation, and to construct highly accurate classifiers~\cite{Li_PAKDD2000}.
%However, when doing knowledge discovery, we often wo   11月11月。rk with dense databases.

% The conventional itemset mining algorithms tailored for sparse datasets, such as LCM~\cite{Uno_FIMI2004}, do not work efficiently for dense datasets. 
% This is due to the ineffectiveness of pruning with negative support, which is commonly employed in existing studies.
% Modern techniques for frequent itemset mining, such as LCM~\cite{Uno_FIMI2004}, effectively finds the patterns that appear frequently in one class.
% They work well in finding constrained patterns of a sparse dataset, where the frequency in another class drops sharply with the addition of items.
Techniques in modern frequent itemset miners, such as LCM~\cite{Uno_FIMI2004}, work also well in finding constrained patterns of a sparse dataset where the frequency drops sharply with the addition of items.
However, for knowledge discovery, we often work with dense databases.
For example, we consider the case that a pattern consists of positive as well as negative items, where a negative item is a special symbol $\bar i$ indicating that the corresponding  positive item $i$ does not appear in a transaction data. 
This is important for interpretability and knowledge discovery because it allows us to describe patterns with fewer combinations of features that would be difficult to express succinctly with only positive items. 
%% In this case, data sets become dense, and thus, the conventional itemset mining algorithms tailored for sparse data sets, such as LCM~\cite{Uno_FIMI2004}, do not work efficiently. 
% In terms of knowledge discovery and interpretability, we want to discover richer patterns faster, containing more knowledge in shorter representations. 
%
%% In this paper, we study efficient mining of minimal constrained patterns under the above classes of composite constraint for the class of itemsets.
%
%
% In this paper, we study the traditional problem of efficient mining of constrained patterns for classes of itemsets. 
% In particular, we develop a fast mining algorithm that works on more dense databases. 
%
%

% ----- ここから編集してみる ----
We propose a mining algorithm for constrained patterns that efficiently works on not only sparse databases and also dense databases.
The key technique of our algorithm is to apply dynamic item ordering during pattern search.
In our algorithm, we use several pruning methods based on dynamic item ordering, and some of which are very effective for dense databases.
The same idea is used for maximal frequent pattern mining~\cite{Bayardo_SIGMOD1998}, but as the best of our knowledge, it has not been considered for constrained pattern mining.
In order to efficiently work dynamic item ordering, we also propose a novel representation of database \textit{DRMX} (Dynamically Reducible Binary Matrix) based on dancing links~\cite{Knuth_2000} which supports the deletion of rows and columns in the arbitrary order at any moment, and undo them in the reverse order to restore the previous snapshot.

By experiments on real data sets, we compare our mining algorithms {MiningMCP} with the previous, state-of-the-art algorithms in both mining and learning tasks. 
After confirming the effectiveness of dynamic ordering in various pruning strategies, we compare the proposed method {MiningMCP} with the state-of-the-art methods LCM~\cite{Uno_FIMI2004} and CP-tree~\cite{Fan_TKDE2006} for mining jumping emerging patterns. 
% We observed that {MiningMCP} outperforms LCM and CP-trees on almost all dense datasets. 
We observed that {MiningMCP} is 100 to 1000 times faster than LCM and CP-trees for almost all dense data sets.
Finally, we conducted binary classification experiments, and observed that the models constructed by our method achieved superior accuracy in all data sets than existing learning methods such as logistic regression, decision tree
ls, and random forests and that the use of negative items was effective in learning some difficult data sets. 

This paper is organized as follows. Section~\ref{sec:pre} gives the preliminaries. The details of our method is provided in section~\ref{sec:alg}. Section~\ref{sec:exp} presents  experimental results. Section~\ref{sec:conc} is conclusion of our paper.

\section{Preliminaries}
\label{sec:pre}

\subsection{Labeled databases and generalized itemsets}
Let $I = \{a_1, \dots, a_n\}$ be an alphabet of $n$ items. 
A \textbf{labeled database} over $I$ is a pair $D = (D_+, D_-)$, where $D_+, D_- \subseteq 2^I$ are possibly overlapping sets of positive and negative tuples over $I$, respectively. 
A tuple in $D$ is also called a data or an example. 
As a class of patterns, we consider the class of generalized itemsets defined as follows. 
A \textbf{literal} is either an item $x \in I$ or its negation $\neg x$. We refer to $x$ and $\neg x$ as \textbf{positive} and \textbf{negative literals}. We denote the \textbf{set of all negative literals} by $\neg I := \{\neg x \mid x \in I\}$.
We denote by $D = D_I := 2^{I\cup\neg I}$ the domain of all possible labeled databases over $I$.  

A \textbf{generalized itemset} (a pattern, for short) over $I$ is an expression $X = X_{pos} \cup X_{neg}$, where $X_{pos} = \{x_1, \ldots, x_k\} \subseteq I$ and $X_{neg} = \{\neg x_{k+1}, \ldots, \neg x_{k+m}\} \subseteq \neg I$ are sets of $k$ positive literals and $m$ negative literals, respectively. 
Then, the \textbf{size} of $X$ is $|X| = k  + m$. Clearly, $X \subseteq I \cup \neg I$. In what follows, we denote by $\mathbb P = 2^{I \cup \neg I}$ the class of  generalized itemsets over $I$. 
For any tuple $t \in 2^I$, a generalized itemset $X$ \textbf{occurs in} $t$, denoted $X \sqsubseteq t$, if all positive literals and none of negative literals of $X$ are contained in $t$, i.e. $\forall i \in [1..k], x_i \in t$ and $\forall j \in [k+1..k+m], x_j \notin t$. For any tuple $t \in D_+\cup D_-$, if $X \sqsubseteq t$, we say that $t$ is an occurrence of $X$ in $D$. For any set $D$ of tuples, the  occurrence list  of $X$ in $D$ is the set $Occ_{D}(X) := \{ t \in D \mid X \sqsubseteq t \}$. The positive and negative  supports are $Sup_+(X) := |Occ_{D_+}(X)|$ and $Sup_-(X) := |Occ_{D_-}(X)|$, respectively.
In terms of propositional logic,  a generalized itemset $X = \{x_1, \ldots, x_k\} \cup \{\neg x_{k+1}, \ldots, \neg x_{k+m}\}$ represents the conjunction $$\widetilde X := (\bigwedge _{i=1}^k x_i) \wedge (\bigwedge _{j=k+1}^{k+m} \neg x_j)$$ of positive and negative literals over $I$. The logical meaning of $X$ is given as follows. For any assignment $t \in 2^I$, we define the associated Boolean assignment $\tilde t: I \to \{0,1\}$ as $\tilde t(x) = 1$ if $x \in t$ and $\tilde t(x) = 0$ otherwise.  Then, we can easily show that $X \sqsubseteq t$ if and only if the conjunction $\widetilde X$ is valid on $\tilde t$, that is, $\tilde t \models \widetilde X$.

\subsection{Our data mining problem}

% \fixme{(future problem: A constraint $\mathbb C$ depends actually on an input database $D$. How to deal with this dependency of $\mathbb C$ on $D$?)}

Let $\mathbb{LD}$ and $\mathbb P$ be domains of labeled databases and  patterns over $I$. A \textbf{pattern constraint} (or constraint)
over $\mathbb{LD}$ and $\mathbb P$ is a mapping 
% \fixme{C(\cdot\mid \mathbb P): D \to 2^{\mathbb I}$}
$\mathbb C(\cdot \mid \mathbb P): \mathbb{LD} \to \mathbb P$  
that assigns a given labeled database $D = (D_+, D_-) \in \mathbb{LD}$ to a subset $\mathbb C(D \mid \mathbb P)\subseteq \mathbb P$ of patterns. 
We will simply refer to $\mathbb C = \mathbb C(D\mid \mathbb P)$ as a \textbf{constraint} on $\mathbb P$ if $D$ is clear from context, . In the later sections, we will introduce classes of particular constraints including contrast, emerging pattern, minimality, and their composite constraints. 
% \fixme{(The above constraint notations have not been introduced yet! )}

% Let $\mathhbb {LD}$ and $\mathbb P$ be domains of labeled databases and  patterns over $I$. A \textbf{pattern constraint} (or constraint)
% over $\mathhbb {LD}$ and $\mathbb P$ is a mapping 
% % \fixme{C(\cdot\mid \mathbb P): D \to 2^{\mathbb I}$}
% $\mathbb C(\cdot \mid \mathbb P): \mathbb{LD} \to \mathbb P$  
% that assigns a subset $\mathbb C = \mathbb C(D\mid \mathbb P)\subseteq \mathbb P$ of patterns 
% to a given labeled database $D = (D_+, D_-) \in D$. 
% We will simply refer to $\mathbb C = \mathbb C(D\mid \mathbb P)$ as a \textbf{constraint} on $\mathbb P$ if $D$ is clear from context, . In the later sections, we will introduce classes of particular constraints including contrast, emerging pattern, minimality, and their composite constraints. 
% % \fixme{(The above constraint notations have not been introduced yet! )}

Now, we state our data mining problem considered in this paper. Suppose we fix a constraint $\mathbb C$. 

\subsubsection{Problem:} 
The constrained pattern mining problem w.r.t. constraint $\mathbb C$ 
% \fixme{($\mathbb C$-INTERESTING CONSTRAINED PATTERN MINING)}
\begin{itemize}
    \item \textbf{Inputs:} A universe $I$ of items and a labeled database $D = (D_+, D_-)$ over $I$. 
    \item \textbf{Task:} Find all $\mathbb C$-interesting generalized patterns $X \in \mathbb P$ such that $X \in \mathbb C(D \mid \mathbb P)$ on the labeled database $D$. 
\end{itemize}

We remark that the above formulation includes many of previous itemset mining problems by changing the constraint $\mathbb C$. In the remainder of this paper, we consider the pattern mining problem under the constraint of the form $\mathbb{MIN}(\mathbb{CP}[\sigma_+,\sigma_-]\cap \mathbb{GR}[\theta]\cap \mathbb{C}[f, \eta])$, where $f$ is any convex function such as the chi-square constraint $\mathbb{CHI}[\eta]$. 

% \subsection{ Reduction from constrained mining for generalized itemset to that for ordinal itemsets} 
% \fixme{まだ書いてない}

% \fixme{ここ独自sectionいる？}
% We can reduce generalized itemset mining to ordinary itemset mining as follows. Let us denote by $\mathbb P_I$ and $\mathbb Q_I$ the classes of generalized and ordinary itemsets over the universe $I$ of items. We give a reduction 

\subsection{Constraints and scores of pattern}

Let $D = (D_+, D_-)$ be a labeled database. A pattern constraint (or constraint) is a subset $\mathbb C\subseteq \mathbb P$ of patterns. In this paper, we consider the following classes of constraints: 

\subsubsection{Constraint 1. Contrast constraint} 

For any non-negative integers $\sigma_+ \in [0..|D_+|]$ and $\sigma_- \in [0..|D_-|]$, the constraint $\mathbb{CP}[\sigma_+, \sigma_-]$ is defined as follows: any pattern $X$ belongs to $\mathbb{CP}[\sigma_+, \sigma_-]$ if and only if $Sup_+(X) \geq \sigma_+$ and $Sup_-(X) \leq \sigma_-$. 

Members of $\mathbb{CP}[\sigma_+, \sigma_-]$ are called \textbf{contrast patterns}. Members of $\mathbb{CP}[1, 0]$ are called \textbf{jumping emerging patterns}. 

\subsubsection{Constraint 2. Growth rate constraint}

For any non-negative real number $\theta \in [0,\infty]$, the constraint $\mathbb{GR}[\theta]$ is defined as follows: a pattern $X$ belongs to $\mathbb{GR}[\theta]$ if and only if $GR(X\mid D_+, D_-) := Sup_+(X) / Sup_-(X) \geq \theta$. 

Members of $\mathbb{GR}[\theta]$ are called \textbf{emerging patterns patterns}. 

\subsubsection{Constraint 3. Chi-square constraint}

% For any pattern $X$ and a labeled database of $n_+$ positive and $n_-$ negative examples, the chi-square value of $X$ is defined through the contingency table $T = (m_{+1}, m_{+0}, m_{-1}, m_{-0}\mid n_+, n_-)$ of $X$, where 
% \begin{itemize}
%     \item $m_{ij}$ is the observed value at the cell indexed by label $i \in \{+,-\}$ and occurrence $j \in \{1,0\}$  such that $n_i := |D_i|$, $m_{i1} := Sup_{i}(X)$ and $m_{i0} := n_i - Sup_{i}(X)$. 

%     \item  $m_{i*} := m_{i1} + m_{i0} = |D_i|$, and $m_{*j} := m_{1j} + m_{0j}$ are the row sum and the column sum. 

%     \item $n = m_{1*} + m_{0*} = m_{*1} + m_{*0} = |D|$. 
% \end{itemize}
% For any label index $i \in \{+,-\}$ and occurrence index $j \in \{1,0\}$, we define the expected value $\tilde m_{i,j}$ in the independence model defined by $\tilde m_{i,j} := n \cdot \frac{m_{i*}}{n} \cdot \frac{m_{*j}}{n}$, $m_{i*} := m_{i1} + m_{i0}$, and $m_{*j} := m_{1j} + m_{0j}$.

% Then, the chi-square value of $X$, denoted by $\chi^2(X)$, is defined by 

% \begin{align}
% \chi^2(X\mid D_+, D_-) 
% &:= \sum_{i\in \{+,-\}, j\in\{1,0\}}\frac{(m_{ij} - \tilde m_{ij})^2}{\tilde m_{ij}}
% \in [0,\infty]
% \end{align}.

% For any non-negative integers $\gamma \in [0,\infty]$, a pattern $X$ belongs to $\mathbb{CHI}[\gamma]$ if and only if $\chi^2(X \mid D_+, D_-) \ge \gamma$. 

For any non-negative real number $\gamma \in [0,\infty]$, a pattern $X$ belongs to $\mathbb{CHI}[\gamma]$ if and only if $\chi^2(X \mid D_+, D_-) \ge \gamma$, where $\chi^2(X)$ is chi-squared value.

\subsubsection{Constraint 4. Composition of constraint} 

The constraint consisting of all patterns satisfying two constraints $\mathbb C_1$ and $\mathbb C_2$ is represented by their intersection $\mathbb C := \mathbb C_1\cap\mathbb C_2$.

\subsubsection{Constraint 5. Minimality constraint}

Let $\mathbb C\subseteq \mathbb P$ be any constraint. The minimal $\mathbb C$-constraint, denoted by $\mathbb{MIN\:C}$, is the set of all minimal members of   w.r.t.~$\mathbb C$, that is, any pattern $X$ belongs to $\mathbb{MIN\:C}$ if and only if (i) $X$ belongs to $\mathbb C$, and (ii) no proper subset $Y \subset X$ belongs to $\mathbb C$.

\begin{figure}[t]
    \centering
    \includegraphics[width=0.55\linewidth]{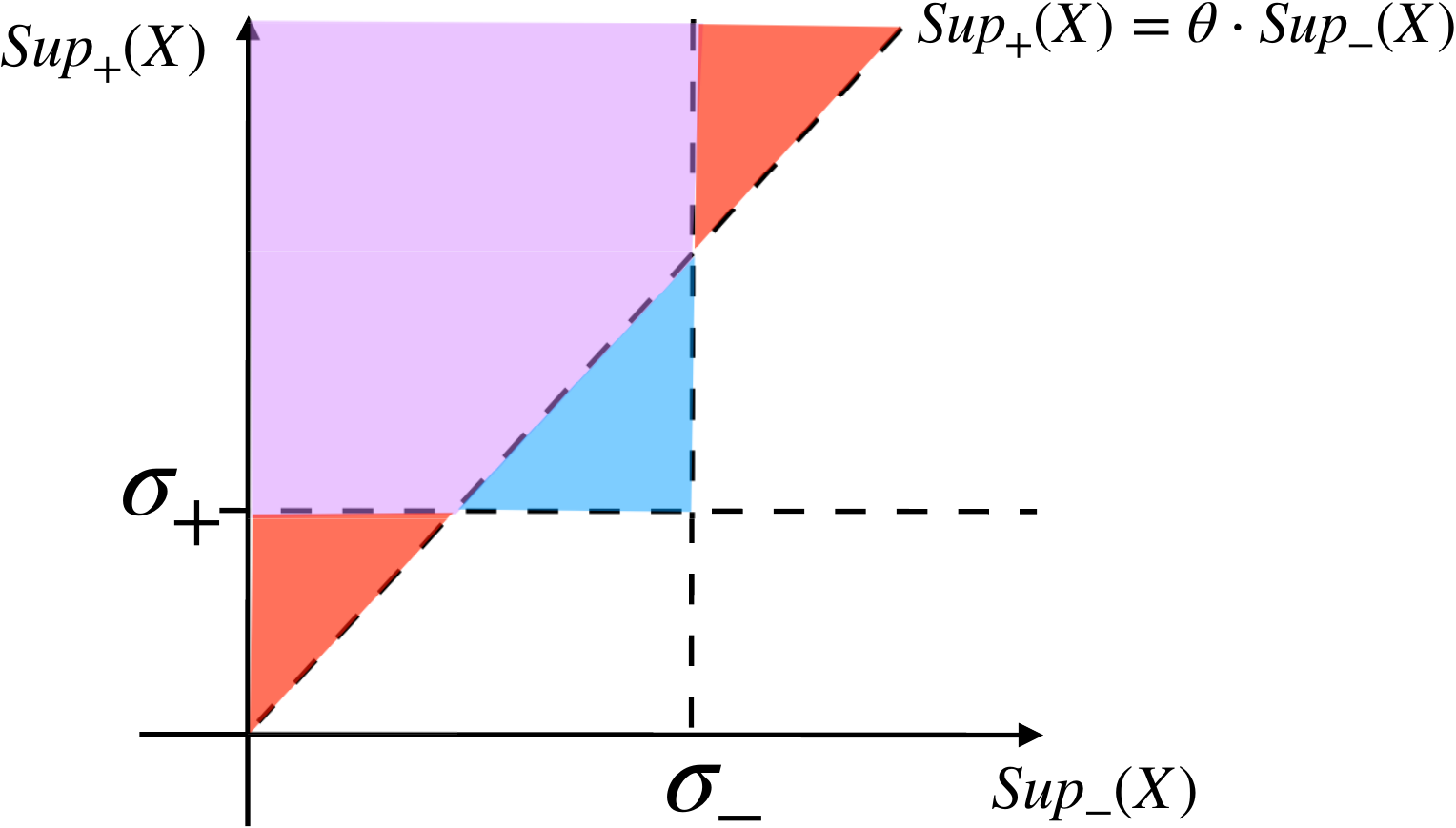}
    \caption{The sub-regions for $\mathbb{CP}[\sigma_+,\sigma_-]$, $\mathbb{GR}[\theta]$, and their composite constraint $\mathbb R :=\mathbb{CP}[\sigma_+,\sigma_-]\cap \mathbb{GR}[\theta]$ as a blue rectangle, a red triangle, and their intersection as a purple pentagon}
    \label{fig:domain}
\end{figure}

\subsubsection{Example of a composite constraint}

For instance, $\mathbb{MIN}(\mathbb{CP}[\sigma_+,\sigma_-]\cap \mathbb{GR}[\theta])$ stands for the class of all minimal patterns that satisfy the contrast constraint w.r.t.~$(\sigma_+,\sigma_-)$ and the growth rate constraint w.r.t.~$\theta$. 

In Fig.~\ref{fig:domain}, we show the sub-regions for $\mathbb{CP}[\sigma_+,\sigma_-]$, $\mathbb{GR}[\theta]$, and their composite constraint $\mathbb R :=\mathbb{CP}[\sigma_+,\sigma_-]\cap \mathbb{GR}[\theta]$ as a blue rectangle, a red triangle, and their intersection as a purple pentagon. 
A minimal pattern is a point within the pentagon $\mathbb R$ that is minimal w.r.t.~set inclusion $\subseteq$.

% In Fig.~\ref{fig:domain}, we show the domain of parameters $(m_+, m_-)$ and the sub-regions for $\mathbb{CP}[\sigma_+,\sigma_-]$, $\mathbb{GR}[\theta]$, and their composite constraint $\mathbb R :=\mathbb{CP}[\sigma_+,\sigma_-]\cap \mathbb{GR}[\theta]$ as a blue rectangle, a red triangle, and their intersection as a purple pentagon. 

%\input{algo}
\section{Algorithms}
\label{sec:alg}

\begin{figure}[t]
    \centering
    \includegraphics[width=0.8\linewidth]{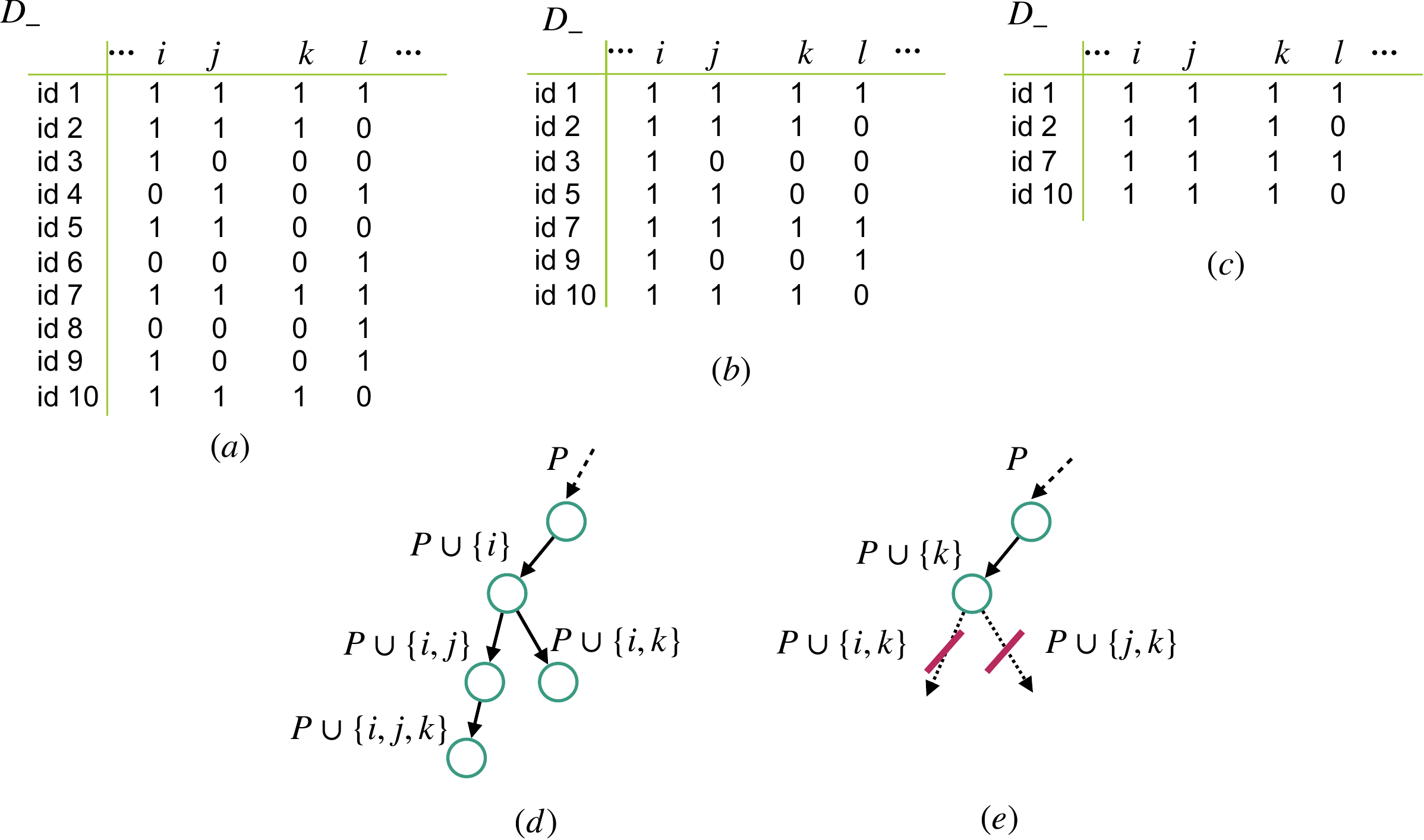}
    \caption{(a): The negative database $D_-$ holding only rows containing a pattern $P$. (b): The negative database $D_-$ holding only rows containing a pattern $P \cup \set{i}$. This can be obtained by deleting the row where the column of $i$ is $0$ in the database in (a). (c): The negative database $D_-$ holding only rows containing a pattern $P \cup \set{k}$. (d): A search tree according to the dictionary order of items. (e): A search tree according to pattern frequency. Since the frequency of $P \cup \set{k}$ that added $k$ to the pattern is less than the frequency of adding other items, $k$ is added preferentially. At this time, $P \cup \set{i,k}$ and $P \cup \set{j,k}$ can be pruned with minimality because because the frequency of negative class does not change from $P \cup \set{k}$.}
    \label{fig:dynamic}
\end{figure}

% \subsection{Outline}

Our proposed algorithm is based on a depth-first search algorithm such as CP-tree~\cite{Fan_TKDE2006}. During the search, pruning is possible when $ Sup _-(P) = Sup _-(P + \{a\})$ based on the minimality and constraint expression (the correctness of this pruning will be proved later)~\cite{Loekito_KDD2006}. The key idea of our algorithm is to consider the search order for variables that satisfy this pruning rule early and do not perform extra searches.
Figure~\ref{fig:dynamic} shows the basic idea. When variables $i$ and $j$ that are $ Sup_ {D _-} (P + \{i\}) < Sup_ {D _-} (P + \{j\})$ are given as next search candidates, it can be quickly pruned by searching $P + \{i\}$ first. 
Therefore, the search is performed with priority given to $i \in B $ with the smallest $Sup_{D_-}(P + \{i\})$ for the currently searched pattern $P$ and the next variable candidate set $B$.  
This means that instead of searching for variables in a predetermined static order, the search order is determined in a dynamic order according to the currently searched pattern $P$.
In order to implement this, it is important to enabling high-speed counting by always reducing the database to contain records only with the pattern $P$ currently being searched.  There is an enumeration of contrast patterns using ZDD in this method using pruning and database reduction. However, because ZDD is constructed with only the static variable order determined in advance due to its data structure, it is difficult to extend to dynamic order.
In this section, we first show the pruning rules used in this algorithm, and then give a data structure that uses dancing links to efficiently execute dynamic variable ordering.

\subsection{Pruning rules}

In this subsection, we explain the types of pruning used in the proposed algorithm. 

\subsubsection{Pruning 1: for minimality constraint}

This is one of the obvious pruning. If we find a pattern $P$ that satisfies constraints other than minimality constraints, we do not need to search for patterns $Q \supset P$.

\subsubsection{Pruning 2: for $EP$ and $CP$ by lower bound}

Let $P \in 2^I$ any pattern at the current iteration. We define the set of descendants by $Desc(P) := \set{ Q \in 2^I \mid P \subseteq Q }$.  
We define the upper bound and the lower bound of the values of $f$ on all descendants of $P$ by
\begin{align*}
  GUB_{D_+, D_-}[f](P)
  &:= GUB[f](Desc(P))
  \\&= \max \set{ f(Q) \mid Q \in 2^I, P \subseteq Q }.
\end{align*}
\begin{align*}
  GLB_{D_+, D_-}[f](P)
  &:= GLB[f](Desc(P))
  \\&= \min \set{ f(Q) \mid Q \in 2^I, P \subseteq Q }.
\end{align*}

If $B$ is a set of all unsearched items, the lower bound of negative frequency is $lb_occ_- =  Sup _- (P \ cup B)$ due to the monotonicity of frequency. 
Similarly, the upper bound of the positive frequency is $ ub_occ _ + = Sup _ + (P) $.
Let $f = GR = Sup _ + (P) / Sup _- (P) $, that is, under the condition of the growth rate constraint, the upper bound on all descendants is $ GUB_ {D_ +, D _-} [GR] (P) \leq  ub_occ _ + / lb_occ _- $. Thus, pruning is possible when this upper bound is smaller than the given parameter $\theta$. The pruning using the lower bound of the negative  frequency and the upper bound of the positive frequency can be similarly used for the contrast constraint and the chi-square value constraint.

% We explain sound pruning using the following lower bounds:
% \begin{align*}
%   & GLB_{D_+, D_-}[Occ_-](P), 
%   \\ & GLB_{D_+, D_-}[GR](P). 
% \end{align*}

% \begin{lemma}
% Algorithm~3 computes a lowerbound of $GLB_{D_+, D_-}[Occ_-](P)$ in $O(mn)$ time using DRMX. 
% \end{lemma}

\subsubsection{Pruning 3: Safe pruning for minimal constrained patterns based on negative conservative elements}  

We show the soundness of a pruning strategy using the negative occurrences. 

% \begin{definition}[Condition C1] 
% For a pattern $P$ and a generalized  item $a \notin P$, we say that $P+a$ is negative conservative if the condition $Sup_-(P) = Sup_-(P+a)$ holds. Then, $a$ is called a negative conservative element. 
% \end{definition} 

% We claim that Condition C1 can be weakened to the following Conditions C1' and C2'on a given pattern constraint $\mathbb C$. 

\begin{definition}[Condition C1']
For any pattern $P$, and any $a \in I\cup \neg I$, if $Sup_-(P) = Sup_-(P+a)$ then the implication $P+a \in \mathbb C \implies P \in \mathbb C$ holds. 
\end{definition} 
% Equivalently, we can replace the implication with $P \notin \mathbb C \implies P+a \notin \mathbb C$  by contraption.

\begin{definition}[Condition C2'] 
If $Occ_+(P) = Occ_+(Q)$ and $Occ_-(P) = Occ_-(Q)$, then the equivalence $P \in \mathbb C \iff Q \in \mathbb C$ holds.
\end{definition} 
% We a sufficient condition below.

% \begin{lemma} If the constraint or a score function $f$ is completely determined by the contingency table of $P$ on $D$,  $\mathbb C$ satisfies Condition C2'. 
% \end{lemma}

% If a constraint satisfies Condition C1', we have the following "drawback property". For pattern $X, Y$, $Y$ is an extension of $X$ if $X\subseteq Y$ holds.
Now, we have the following lemma and theorem.

\begin{lemma}[Soundness of pruning for $\mathbb C$-patterns]\label{lemma:pr3}: Suppose that a constraint $\mathbb C$ satisfies Conditions C1' and C2' above. Let $P$ be any pattern and any item $a \notin P$. Suppose that $Sup_-(P) = Sup_-(P+a)$. For any pattern $Z$ that is an extension of $P+a$, where $P+a \subseteq Z$, $Z \in \mathbb C \implies Z\setminus\{a\} \in \mathbb C$. 
\end{lemma}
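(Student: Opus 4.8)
The plan is to reduce the whole statement to a single invocation of Condition C1'. Write $W := Z \setminus \{a\}$, so that $Z = W + a$ and, since $P + a \subseteq Z$, we have $P \subseteq W$ and $a \notin W$. If I can show that the negative-support equality assumed at $P$ is inherited by $W$, namely $Sup_-(W) = Sup_-(W+a)$, then Condition C1' applied to the pattern $W$ and the item $a$ yields $W + a \in \mathbb C \implies W \in \mathbb C$, which is exactly $Z \in \mathbb C \implies Z \setminus \{a\} \in \mathbb C$. Thus everything hinges on propagating the hypothesis from $P$ to the larger pattern $W$.

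The key step, and the main obstacle, is this propagation claim: if $Sup_-(P) = Sup_-(P+a)$ and $P \subseteq W$, then $Sup_-(W) = Sup_-(W+a)$. I would prove it at the level of occurrence lists. By anti-monotonicity of occurrence under $\sqsubseteq$ (adding the literal $a$ only imposes an extra constraint), $Occ_-(P+a) \subseteq Occ_-(P)$; since these sets are finite and equinumerous by hypothesis, they coincide, so $Occ_-(P) = Occ_-(P+a)$. Reading this set equality tuple-wise, every negative tuple $t$ with $P \sqsubseteq t$ must also satisfy $a \sqsubseteq t$. Now take any $t \in Occ_-(W)$. Then $W \sqsubseteq t$, and since $P \subseteq W$ we get $P \sqsubseteq t$, whence $a \sqsubseteq t$; therefore $(W+a) \sqsubseteq t$, i.e. $t \in Occ_-(W+a)$. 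This gives $Occ_-(W) \subseteq Occ_-(W+a)$, and the reverse inclusion is again immediate from anti-monotonicity since $W \subseteq W+a$. Hence $Occ_-(W) = Occ_-(W+a)$, and in particular $Sup_-(W) = Sup_-(W+a)$.

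With the propagation claim in hand the conclusion is immediate as sketched above: Condition C1' applied to $W$ and $a$ gives $Z = W+a \in \mathbb C \implies W = Z\setminus\{a\} \in \mathbb C$. I expect the occurrence-list propagation to be the only nontrivial point, the finiteness and anti-monotonicity facts being routine consequences of the definition of $\sqsubseteq$. I also note that this argument uses only Condition C1' together with anti-monotonicity of $Occ$; Condition C2', which equates membership for patterns sharing the same positive and negative occurrence lists, is not actually needed here, since $W$ and $Z = W+a$ may well differ on their positive occurrences. Presumably C2' is reserved for the companion statement about minimal $\mathbb C$-patterns rather than for this lemma.
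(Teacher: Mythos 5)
Your proof is correct. The paper actually states Lemma~1 without giving an explicit proof (only the subsequent theorem is proved, by invoking the lemma), and your argument is exactly the natural one that the authors rely on: from $Sup_-(P)=Sup_-(P+a)$ and anti-monotonicity of occurrence you get $Occ_-(P)=Occ_-(P+a)$, propagate this to $W=Z\setminus\{a\}\supseteq P$ to obtain $Sup_-(W)=Sup_-(W+a)$, and then apply Condition C1' to $W$ and $a$. Your side remark is also accurate: only C1' (plus anti-monotonicity) is used here, and C2' plays no role in this lemma; it is listed in the hypotheses only because the lemma is packaged together with the pruning theorem for minimal $\mathbb C$-patterns.
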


% \fixme{[Proof sketch] Suppose that pattern $X+a$ with $a\notin X$, satisfies the conditions of Theorem B2. We write $\mathbb C$ for $\mathbb C(D)$. Suppose that $Z \in \mathbb C$ be any extension of $X+a$ satisfying the constraint $\mathbb C$. Then, we can write $Z = X+Y+a$ by choosing $Y := Z \setminus (X+a) \supseteq \emptyset$. Clearly, $Y\cap (X+a) = \emptyset$. }

% \fixme{[proof sketch2]
% First, we show the following claim: 

% Claim 1: $Occ_{D_-}(X) = Occ_{D_-}(X+a)$. 

% (Proof for Claim 1) Since $X\subseteq X+a$ and the anti-monotonicity $Occ_{D_-}(X) \supseteq Occ_{D_-}(X+a)$ holds for class $\mathbb P$, the assumption $Sup_-(X) = Sup_-(X+a)$ implies the claim. (End of the proof for Claim 1) 

% From Lemma B1, we have the following claim: 

% *Claim 2*: $Occ_{D_-}(Z) = Occ_{D_-}(Z\setminus \{a\})$. 
% [Proof for Claim 2]
% \begin{align}
% &Occ_{D_-}(Z) 
% \\&= Occ_{D_-}(X+Y+a) 
% \\&= Occ_{D_-}((X+a)+Y) 
% \\&= Occ_{D_-}(X+a)\cap Occ_{D_-}(Y) 
% \qquad\mbox{(From Lemma B1)}
% \\&= Occ_{D_-}(X)\cap Occ_{D_-}(Y)
% \qquad\mbox{(From Claim 1*)}
% \\&= Occ_{D_-}(X+Y) 
% \qquad\mbox{(From Lemma B1)}
% \\&= Occ_{D_-}(Z\setminus \{a\}) 
% \end{align}
% (End of Proof for Claim) 

% By combining Condition C2' and Claim 2, we have the equality $Z \in \mathbb C(D) \iff Z\setminus \{a\} \in \mathbb C(D)$. Hence, the assumption $Z \in \mathbb C$ implies the consequence $Z\setminus\{a\} \in \mathbb C$. This proves the theorem. [End of Proof for Theorem B2]
% }

\begin{theorem}[Soundness of pruning for minimal $\mathbb C$-patterns]: Suppose that a constraint $\mathbb C$ satisfies Conditions C1' and C2' above. Suppose that $Sup_-(P) = Sup_-(P+a)$. Then, any extension $Z$ of pattern $(P+a)$ never  satisfy the minimality constraint $\mathbb{MINC}$ w.r.t. $\mathbb{C}$. That is, for any $Z$, the condition $P+a \subseteq Z$ implies that $Z \notin \mathbb{MINC}$. 
\end{theorem}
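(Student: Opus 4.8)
The plan is to argue by contradiction and to let the already-established Lemma~\ref{lemma:pr3} do essentially all of the analytic work. As in that lemma, I read the hypothesis $Sup_-(P) = Sup_-(P+a)$ together with the implicit assumption $a \notin P$ (otherwise $P+a = P$ and the statement degenerates), so that $P+a$ is a genuine proper extension of $P$. Suppose, for contradiction, that some pattern $Z$ with $P+a \subseteq Z$ does satisfy the minimality constraint, i.e.\ $Z \in \mathbb{MINC}$.

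By the definition of $\mathbb{MINC}$ (Constraint~5), membership $Z \in \mathbb{MINC}$ unpacks into two facts: (i) $Z \in \mathbb C$, and (ii) no proper subset $Y \subsetneq Z$ belongs to $\mathbb C$. I will use (i) to invoke the lemma and (ii) to derive the contradiction.

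The key step is to apply Lemma~\ref{lemma:pr3} to the pattern $Z$: since $\mathbb C$ satisfies Conditions C1' and C2', since $Sup_-(P) = Sup_-(P+a)$, and since $P+a \subseteq Z$ with $Z \in \mathbb C$ by (i), the lemma yields $Z \setminus \set{a} \in \mathbb C$. It then remains only to observe that $Z \setminus \set{a}$ is a \emph{proper} subset of $Z$: because $a \in P+a \subseteq Z$ we have $a \in Z$, so deleting $a$ strictly shrinks $Z$, giving $Z \setminus \set{a} \subsetneq Z$. Thus $Z \setminus \set{a}$ is a proper subset of $Z$ lying in $\mathbb C$, which directly contradicts the minimality clause (ii). Hence no such $Z$ can belong to $\mathbb{MINC}$, and every extension of $P+a$ is safely pruned.

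I do not expect a serious obstacle, since the substantive content — that removing $a$ preserves membership in $\mathbb C$ — is exactly what the lemma provides; the theorem is just the combinatorial repackaging of that fact through the definition of minimality. The only point requiring care is the bookkeeping around $a$: I must confirm $a \notin P$ so that $Z \setminus \set{a}$ still contains $P$ (indeed $P \subseteq Z \setminus \set{a}$ because $a \notin P$), which keeps $Z \setminus \set{a}$ a legitimate descendant of $P$ and secures the strict inclusion $Z \setminus \set{a} \subsetneq Z$ needed to contradict (ii). With that verified, the contradiction is immediate.
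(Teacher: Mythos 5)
Your proof is correct and follows essentially the same route as the paper: assume for contradiction that $Z \in \mathbb{MINC}$, deduce $Z \in \mathbb{C}$, apply Lemma~\ref{lemma:pr3} to obtain $Z\setminus\{a\} \in \mathbb{C}$, and use $a \in Z$ to conclude that $Z$ has a proper subset in $\mathbb{C}$, contradicting minimality. Your additional bookkeeping that $a \notin P$ ensures $P \subseteq Z\setminus\{a\}$ is a harmless refinement the paper leaves implicit.
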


\begin{proof}
We assume the conditions of C1' and C2', and  that a pattern $P$ satisfies $Sup_-(P) = Sup_-(P+a)$. Now, suppose to contradict that $Z \in \mathbb{MINC}$ for some (possibly identical) extension $Z$ of $(P+a)$. Since $\mathbb{MINC}\subseteq \mathbb{C}$, we  have $Z \in \mathbb{C}$. Then, it immediately follows from Lemma~\ref{lemma:pr3} that $Z\setminus\{a\} \in \mathbb{C}$. Since $a \in Z$, the set $Z\setminus\{a\}$ is a strict subset of $Z$. Hence, $Z$ cannot be minimal in $\mathbb{C}$, i.e., $Z \notin \mathbb{MINC}$. 
\end{proof}

At any unsuccessful iteration on $P$ such that $P \notin \mathbb C$, if the condition $Sup_-(P) = Sup_-(P+a)$ holds, then we can prune all the descendants of  $(P+a)$, and then backtrack to the parent $P$.

\subsection{Dynamically reducible binary matrices}

In this subsection, we propose an  representation of a binary matrix, called, dynamically reducible binary matrix (DRMX), 
which allows efficient modification and undo operations on a transaction database necessary to dynamic item ordering during backtrack search for candidate patterns. To achieve this goal, we employ the dancing link data structure of Knuth~\cite{Knuth_2000}. 

\begin{definition}[D1]
The DRMX data structure $\mathbb M$ stores a transaction database $M = (T, I, R)$ supports the following operations, where we refer to a tuple and an item as a row and a column, respectively. 
\end{definition}

% \begin{itemize}
%     \item  $\mathbb M := create(M)$: Create a new DRMX storing a given transaction database $D$. 
%     \item  $M.deleteRow(i)$: Remove the row with rid $i$ from the matrix, and store it in the undo-stack. 
%     \item  $M.deleteColumn(j)$: Remove the column with cid $j$ from the matrix, and store it in the undo-stack. 
%     \item   $M.undo()$: Extract the latest modification operation $op$ from the undo-stack, and then recover the status of the matrix $M$ before applying $op$ by restoring $op$. 
%     \item  $M.countRows(P)$: Return the number of rows where $j=1$ for all column $j\in P$ in the matrix $M$. If $P = \emptyset$ then return the number of rows in the matrix $M$.
% \end{itemize}

\begin{itemize}
    \item  $\mathbb M := DRMX.create(M)$: Create a new DRMX storing a given transaction database $D$. 
    \item  $\mathbb M.deleteRow(i)$: Remove the row with rid $i$ from the matrix. 
    \item  $\mathbb M.deleteColumn(j)$: Remove the column with cid $j$ from the matrix. 
    \item $\mathbb M.checkpoint()$: Push the current state of the matrix on the undo-stack
    \item   $\mathbb M.undo(i)$: Pop $i$-times from the undo-stack, and then recover the status of the matrix $M$ at the time of the checkpoint
    \item  $\mathbb M.countRows(P)$: Return the number of rows where $j=1$ for all column $j\in P$ in the matrix $M$. If $P = \emptyset$ then return the number of rows in the matrix $M$.
\end{itemize}

Dancing links can perform these operations efficiently. In the next subsection we describe our algorithm in pseudo  code using these operations.

\subsection{Pseudo codes of our algorithm}
Pseudo codes of our algorithm is shown in Algorithm~\ref{alg:main} and Algorithm~\ref{alg:mine}. In the algorithm~\ref{alg:main}, the solution candidates are mined on the line~3, and then the minimal solution is extracted on the fourth line. A method using BDD has been proposed for narrowing down the minimum solution~\cite{Toda_EA2013}. Algorithm~\ref{alg:mine} is the main mining algorithm.

%% beg: Algorithm 1 %%%%%%%%%%%%%%%%%%%%%%%%%%%%%%%%
\IncMargin{1em}

 \SetKwInOut{Input}{input}
 \SetKwInOut{Output}{output}
 \SetFuncSty{textrm}
 \SetCommentSty{textrm}
 \SetKwFunction{MiningMCP}{{\scshape MiningMCP}}
 %% \SetKwFunction{createDRMX}{{\scshape createDRMX}}
 \SetKwFunction{FindCandidates}{{\scshape FindCandidates}}
 \SetKwFunction{ExtractMinimalPatterns}{{\scshape ExtractMinimalPatterns}}
 \SetKwProg{myfunc}{}{}{}
\begin{algorithm}[th]
 \label{alg:main}
\caption{Main function for mining $\mathbb{MINCP}$}
\Input{A pair $D_+, D_- \subseteq 2^{I}$ of positive and negative datasets represented in the DRMX data structure, a tuple $\Theta = (\sigma_+, \sigma_-, \theta, \gamma)$ of  mining parameters (See above for the meaning of symbols). }
%% \Input{$D=(D_+, D_-)$: Dataset, $I$: Itemset, $\sigma_+$: Minimum positive support threshold, $\sigma_-$: Maximum negative support threshold,  $\theta$: Minimum GR threshold, $\gamma$: Minimum Chi square value threshold}
%%  \caption{An algorithm for finding candidates for minimal contrast patterns in $\mathbb{MINCP}$ under the following mining parameters: $\sigma_+, \sigma_-$: positive and negative minimum positive support thresholds, $\theta$: minimum growth-rate threshold, $\gamma$: minimum $\chi^2$-value threshold. }
%% \Input{The current pattern $P \subseteq I$, a set $B \subseteq I$ of available items;  a pair $D_+, D_- \subseteq 2^{I}$ of positive and negative datasets represented in the DRMX data structure, a tuple $\Theta = (\sigma_+, \sigma_-, \theta, \gamma)$ of  mining parameters (See above for the meaning of symbols). }
%% \Output{$MCP$: Set of minimal constrained patterns}
%% \Output{The set $\mathbb{MINCP}$ of all and only minimal constrained patterns.}
\Output{The set $MCP \subseteq 2^{I}$ of all and only minimal constrained patterns.}
\myfunc{\MiningMCP{$D, I, \Theta$}}{
%% \myfunc{\MiningMCP{$D,I, \sigma_+, \sigma_-, \theta, \gamma$}}{
    %% $(D_+, D_-) \gets \createDRMX{D}$ \;
    $(D_+, D_-) \gets DRMX.create(D)$ \;
    $CP \gets \FindCandidates(\emptyset, I, \mathbb  D_+, D_-, \Theta)$ \;
    %% $\mathbb{CP} \gets \FindCandidates(I, \emptyset, \mathbb  D_+, D_-, \sigma_+, \sigma_-, \theta, \gamma)$ \;
    %% $\mathbb{CP} \gets \FindCandidates(I, \emptyset, \mathbb  D_+, D_-, \sigma_+, \sigma_-, \theta, \gamma)$ \;
    $MCP \gets \ExtractMinimalPatterns(CP)$ \;
    \Return $MCP$ \;
}
\end{algorithm}
\DecMargin{1em}
%% end: Algorithm 1 %%%%%%%%%%%%%%%%%%%%%%%%%%%%%%%%

%% %% beg: Algorithm 1 %%%%%%%%%%%%%%%%%%%%%%%%%%%%%%%%
%% \IncMargin{1em}
%% \begin{algorithm}[th]
%% \caption{Main function for mining $\mathbb{MINCP}$}
%%  \label{algorithm:main}
%%  \SetKwInOut{Input}{input}
%%  \SetKwInOut{Output}{output}
%%  \SetFuncSty{textrm}
%%  \SetCommentSty{textrm}
%%  \SetKwFunction{MiningMCP}{{\scshape MiningMCP}}
%%  \SetKwFunction{createDRMX}{{\scshape createDRMX}}
%%  \SetKwFunction{FindCandidates}{{\scshape FindCandidates}}
%%  \SetKwFunction{ExtractMinimalPatterns}{{\scshape ExtractMinimalPatterns}}
%%  \SetKwProg{myfunc}{}{}{}
%% \Input{$D=(D_+, D_-)$: Dataset, $I$: Itemset, $\sigma_+$: Minimum positive support threshold, $\sigma_-$: Maximum negative support threshold,  $\theta$: Minimum GR threshold, $\gamma$: Minimum Chi square value threshold}
%% \Output{$\mathbb{MCP}$: Set of minimal constrained patterns}
%% \myfunc{\MiningMCP{$D,I, \sigma_+, \sigma_-, \theta, \gamma$}}{
%%     $(D_+, D_-) \gets \createDRMX{D}$ \;
%%     $\mathbb{CP} \gets \FindCandidates(I, \emptyset, \mathbb  D_+, D_-, \sigma_+, \sigma_-, \theta, \gamma)$ \;
%%     $\mathbb{MCP} \gets \ExtractMinimalPatterns(\mathbb{CP})$ \;
%%     \Return $\mathbb{MCP}$ \;
%% }
%% \end{algorithm}
%% \DecMargin{1em}
%% %% end: Algorithm 1 %%%%%%%%%%%%%%%%%%%%%%%%%%%%%%%%

\newcommand{\deli}[1]{\setminus\set{#1}}

%% beg: Algorithm 2 %%%%%%%%%%%%%%%%%%%%%%%%%%%%%%%%
\IncMargin{1em}
\begin{algorithm}[h!]
 \small
 \label{alg:mine}
 \SetKwInOut{Input}{input}
 \SetKwInOut{Output}{output}
 \SetFuncSty{textrm}
 \SetCommentSty{textrm}
 \SetKwFunction{FindCandidates}{{\scshape FindCandidates}}
 \SetKwProg{myfunc}{}{}{}
%% \caption{Find candidate patterns for $\mathbb{MINCP}$}
 \caption{An algorithm for finding candidates for   minimal contrast patterns in $\mathbb{MINCP}/$ under the following mining parameters: $\sigma_+, \sigma_-$: positive and negative minimum positive support thresholds, $\theta$: minimum growth-rate threshold, $\gamma$: minimum $\chi^2$-value threshold. }
% \Input{The current pattern $P \subseteq I$, a set $B \subseteq I$ of available items;  a pair $D_+, D_- \subseteq 2^{I}$ of positive and negative datasets represented in the DRMX data structure, a tuple $\Theta = (\sigma_+, \sigma_-, \theta, \gamma)$ of  mining parameters (See above for the meaning of symbols). }

%% \Output{$\mathbb{CP}$: Set of constrained patterns}
%% \myfunc{\FindCandidates{$I, P, D_+, D_-, \sigma_+, \sigma_-, \theta, \gamma$}}{
%% \myfunc{\FindCandidates{$P, B, D_+, D_-, \sigma_+, \sigma_-, \theta, \gamma$}}{
\myfunc{\FindCandidates{$P, B, D_+, D_-, \Theta$}}{
  $CP \gets \emptyset$ \; 
    $(occ_+, occ_-) \gets (D_+.countRow(), D_-.countRow())$  \tcp*{positive and negative frequencies}
  \tcp{Pruning1: Pruning with minimal constraints}
    \If{$isCP(occ_+, occ_-, \Theta) = true$}{
        $CP \gets CP \cup \set{P}$ \tcp*{discover patterns}
        \Return{CP} \; 
    }

    \tcp{Pruning2: Pruning using the lower bound of the negative frequency of all descendants of $P$}
    $lb\_occ_- \gets D_+.countRow(\emptyset)$ \tcp*{the lower bound for negative frequency of $P$}
      \If{ $(lb\_occ_- > \sigma_-)$ }{
        \tcp{Prunes all descendants of $P \ cup \ set {i}$}
        return \;
      }

    \tcp{Pruning 3: Pruning EP using the upper bound of the GR of all descendants of $P$}
    $ub\_occ_+ \gets D_-.countRow(B)$ \tcp*{the upper bound of positive frequency of $P$}
    $ub\_gr \gets ub\_occ_+/lb\_occ_-$ \tcp*{Maximum GR of all descendants of $P$}
    \If{  $ub\_gr < \theta)$ }{ 
      return \tcp*{Prunes all descendants of $P$}
    }

    \tcp{Pruning 4: Database reduction based on minimal constraints}
    %% \tcp{枝刈り3: $P$の極小制約に基づく枝刈りB}
    \For{each $i \in B$}{
      \If{$(D_-.countRow() = D_-.countRow(i))$}{
        $B \gets B \setminus i$ \;
        $D_k.deleteColumn(i)$, $\forall k\in\set{+,-}$ \;
      }
    } %%for
    
    \If{$B = \emptyset$}{
        \Return{CP}
    }

    \tcp{Dynamic Ordering based on the frequency of $P\cup\set{i}$ on $D_-$}
    $i_* \gets \argmin_{i \in B} ( D_-.countRow(i ))$
    \tcp*{Select the least frequent item on $D_-$}

    Record the current snapshot of $D_k$ as a checkpoint $\tau$\; 
    \tcp{Branch 0}
    $D_k \gets D_k\deli{i_*}$, $\forall k\in\set{+,-}$ \tcp*{fast implementation by DRMX}
    $CP \gets CP \; \cup$ \FindCandidates{$P, B\deli{i_*}, D_+, D_-, \Theta$} \;
    \vspace{1em}
    \tcp{Branch 1}
    $D_k \gets \set{ t \in D_k \mid P\cup\set{i_*} \subseteq t }$, $\forall k\in\set{+,-}$ \; 
    $CP \gets CP \;\cup$ \FindCandidates{$P\cup\set{i_*},  B\deli{i_*}, D_+, D_-, \Theta$}  \; 
    Undo the modification of $D_k$ at $\tau$\; 
    \Return{CP}
}
\end{algorithm}
\DecMargin{1em}
\section{Experimental results}
\label{sec:exp}

% 実験データと結果へのコメントを足していきます by 鈴木 and 岩下

We examine the following three experimental results in this section.
The first is a speed comparison between the heuristic with various static variable orders and the proposed dynamic variable order.
Second, speed comparison with other methods.
Finally, we evaluate the classification model using the patterns that we actually mined.
We implemented our algorithm using C++.
All CPU time is measured on a Linux workstation with Intel Xeon E5-2680 v2 2.80GHz CPU with 400GB memory.

\subsection{Experiments 1: Effectiveness of dynamic ordering}

In this experiment, we investigate the speed difference between the static variable order and the dynamic variable order in our proposed method.
Table~\ref{tbl:dataset-enum} gives the datasets used for performance evaluation in this and the next subsections.
They are all from the CP4IM dataset\footnote{https://dtai.cs.kuleuven.be/CP4IM/datasets/} with more than 50 items and more than 200 examples.
The column ``density'' shows the average percentage of the number of items in an example over the number of all items.
Mushroom and Splice-1 are relatively sparse, German-credit is moderate, and the rest are dense.
The columns ``\#JEPs'' and ``\#SJEPs'' indicate the number of jumping emerging patterns and strong jumping emerging patterns, respectively, when minimum support threshold is set to 0.02 times the number of positive examples.

\begin{table}[!ht]\centering
\caption{Datasets used in Experiments 1 and 2.}
\label{tbl:dataset-enum}
\begin{tabular}{crrrrr}
name & \#item & \#example & density & \#JEPs & \#SJEPs \\
\hline
Mushroom & 119 & 8124 & 18\% & 21574290 & 1353 \\
Splice-1 & 287 & 3190 & 21\% & 377330 & 179810 \\
German-credit & 112 & 1000 & 34\% & 2410029163 & 148303 \\
Kr-vs-kp & 73 & 3196 & 49\% & 129786095160 & 7283 \\
Hypothyroid & 88 & 3247 & 50\% & 40807701172704 & 1966 \\
Anneal & 93 & 812 & 45\% & 34803198050304 & 3906 \\
Heart-cleveland & 95 & 296 & 47\% & 29701186840434 & 946235 \\
Australian-credit & 125 & 653 & 41\% & 261786633471699 & 2057646 \\
Audiology & 148 & 216 & 45\% & \textit{unknown} & 2858 \\
\end{tabular}
\end{table}

\begin{table}[!ht]
\centering
\caption{Datasets used in Experiment 3.}

\begin{tabularx}{\columnwidth}{cXXX}
name & \#sample & {\#feature (not binarized)} & \#target class \\ \hline
Banknote Authentication & 1372 & 5 & 1 \\
Breast Tissue& 106 & 10 & 6 \\
Glass Identification & 214 & 10 & 6\\
Iris & 150 & 4 & 3 \\
Wireless Indoor Localization (Wifi) & 2000 & 7 & 4 \\
Yeast & 1484 & 8 & 9 \\
\end{tabularx}
\label{tbl:dataset-pred}
\end{table}

\begin{figure}[t]\centering
\includegraphics[width=.32\columnwidth]{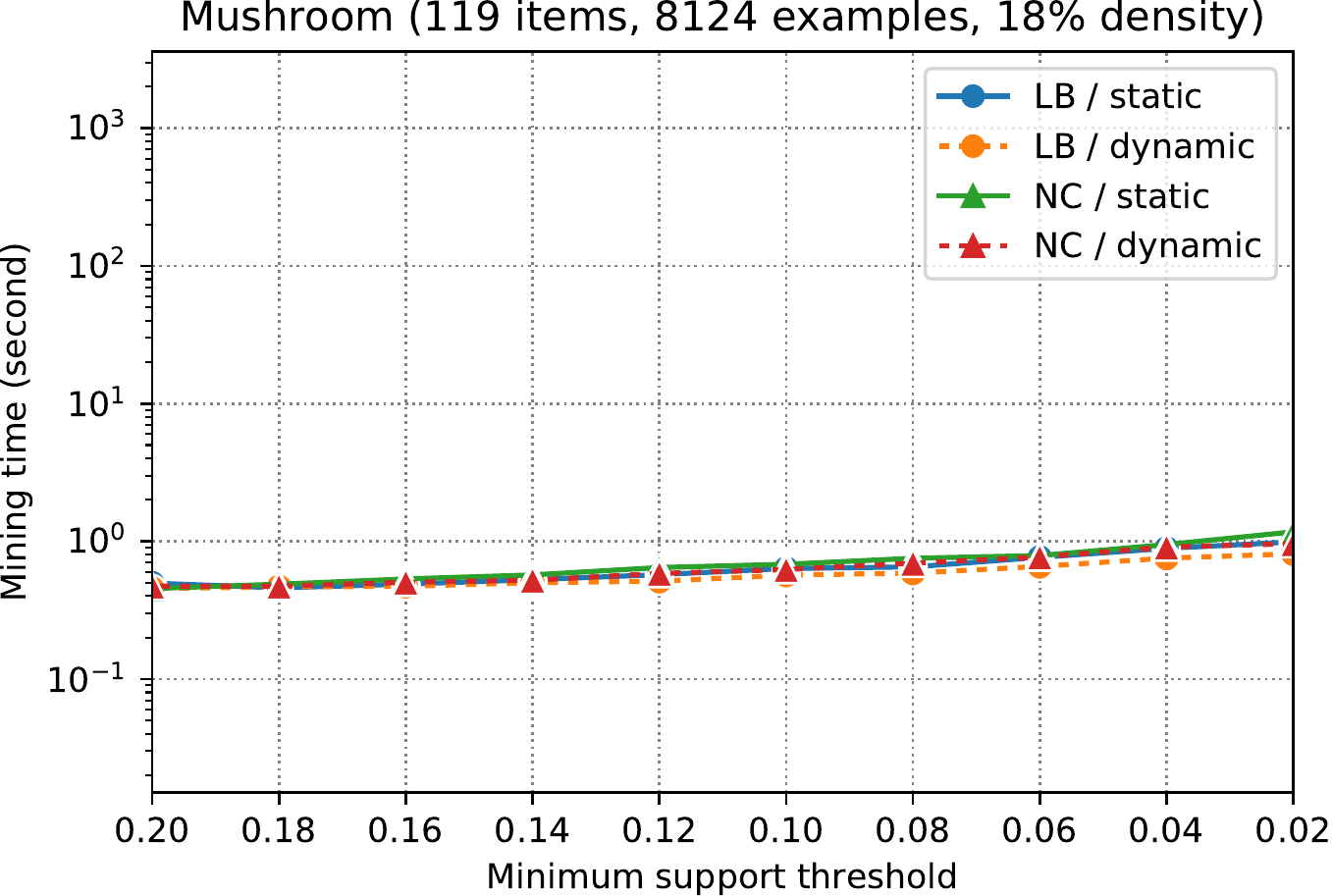}\hfill
\includegraphics[width=.32\columnwidth]{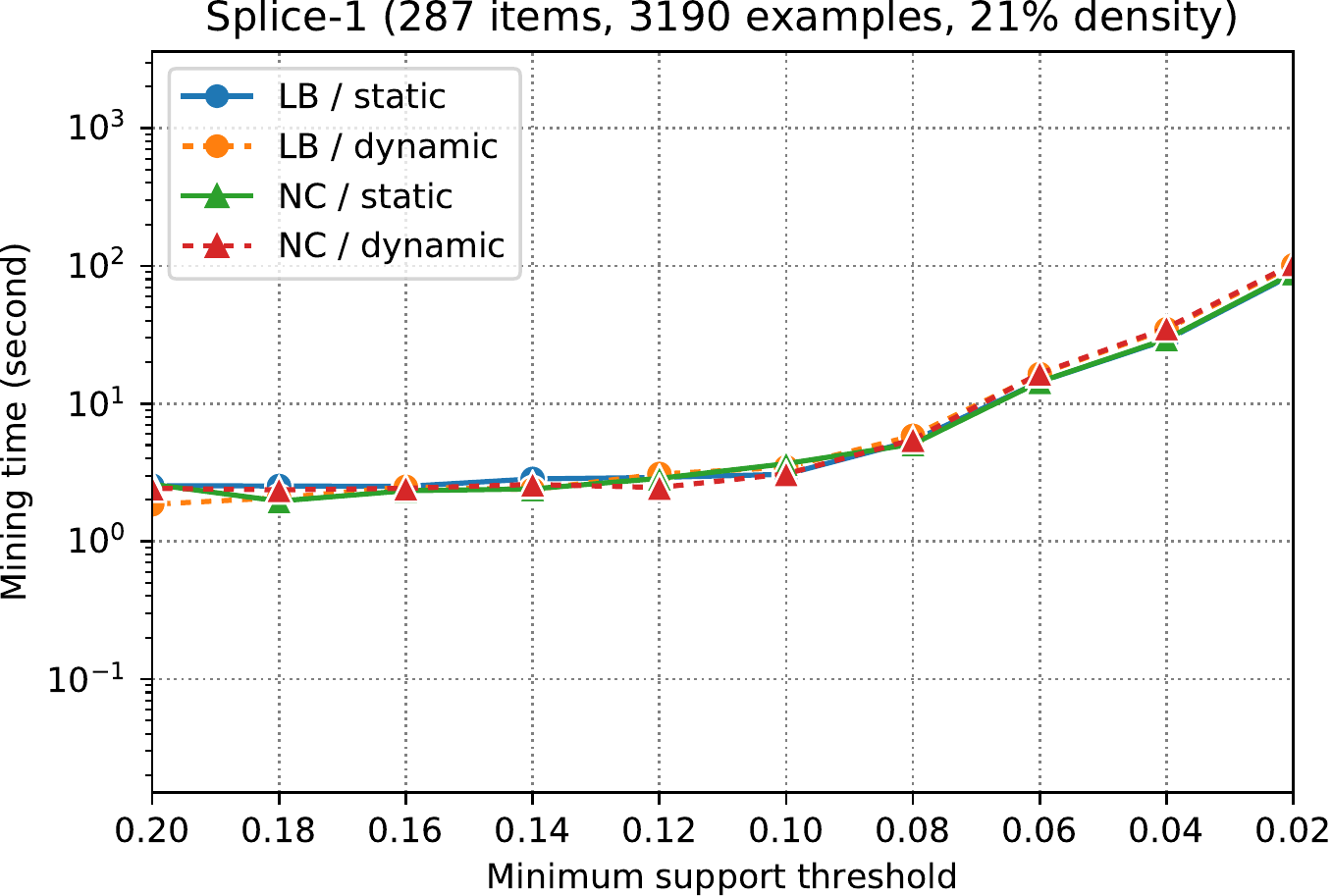}\hfill
\includegraphics[width=.32\columnwidth]{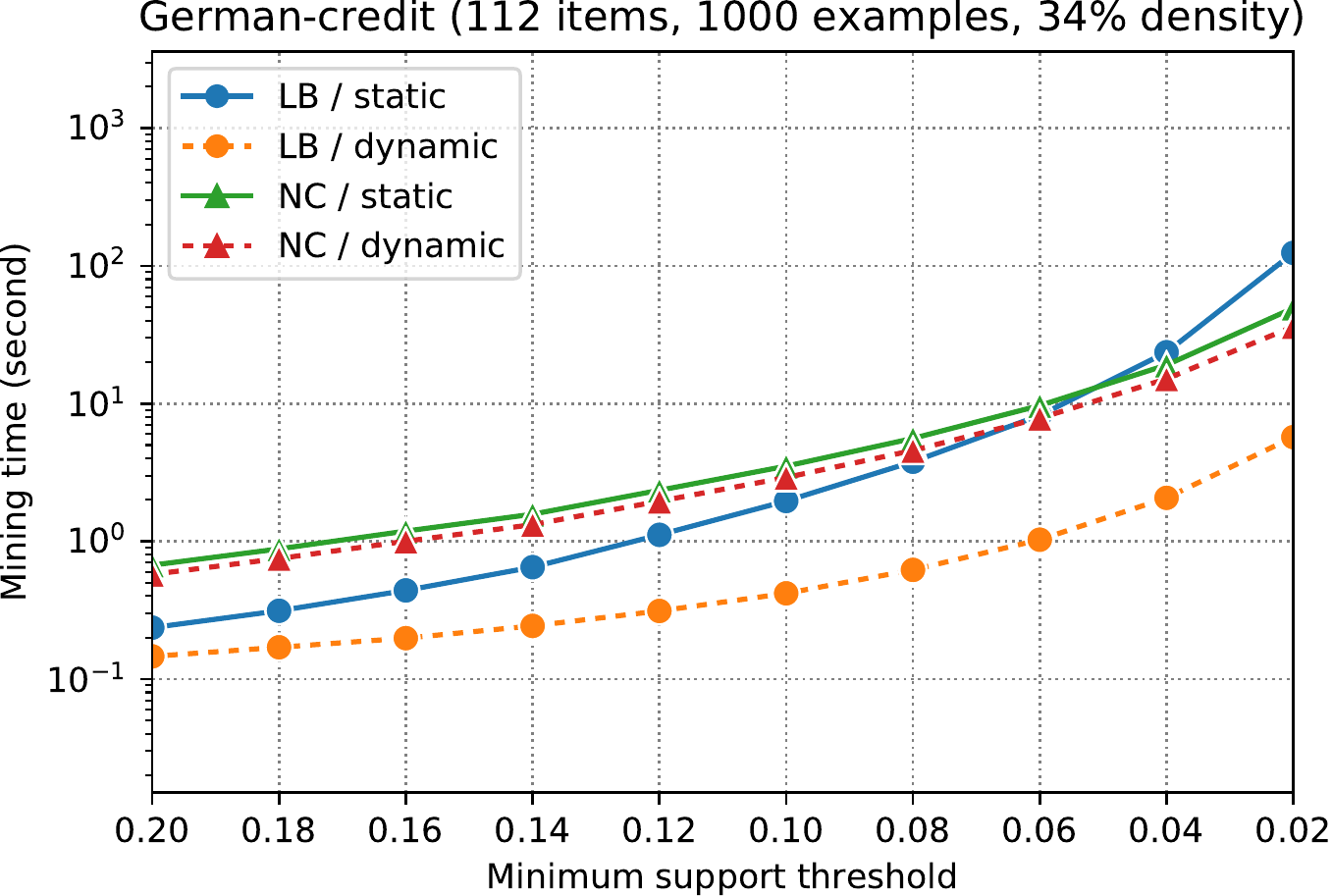}\\\medskip
\includegraphics[width=.32\columnwidth]{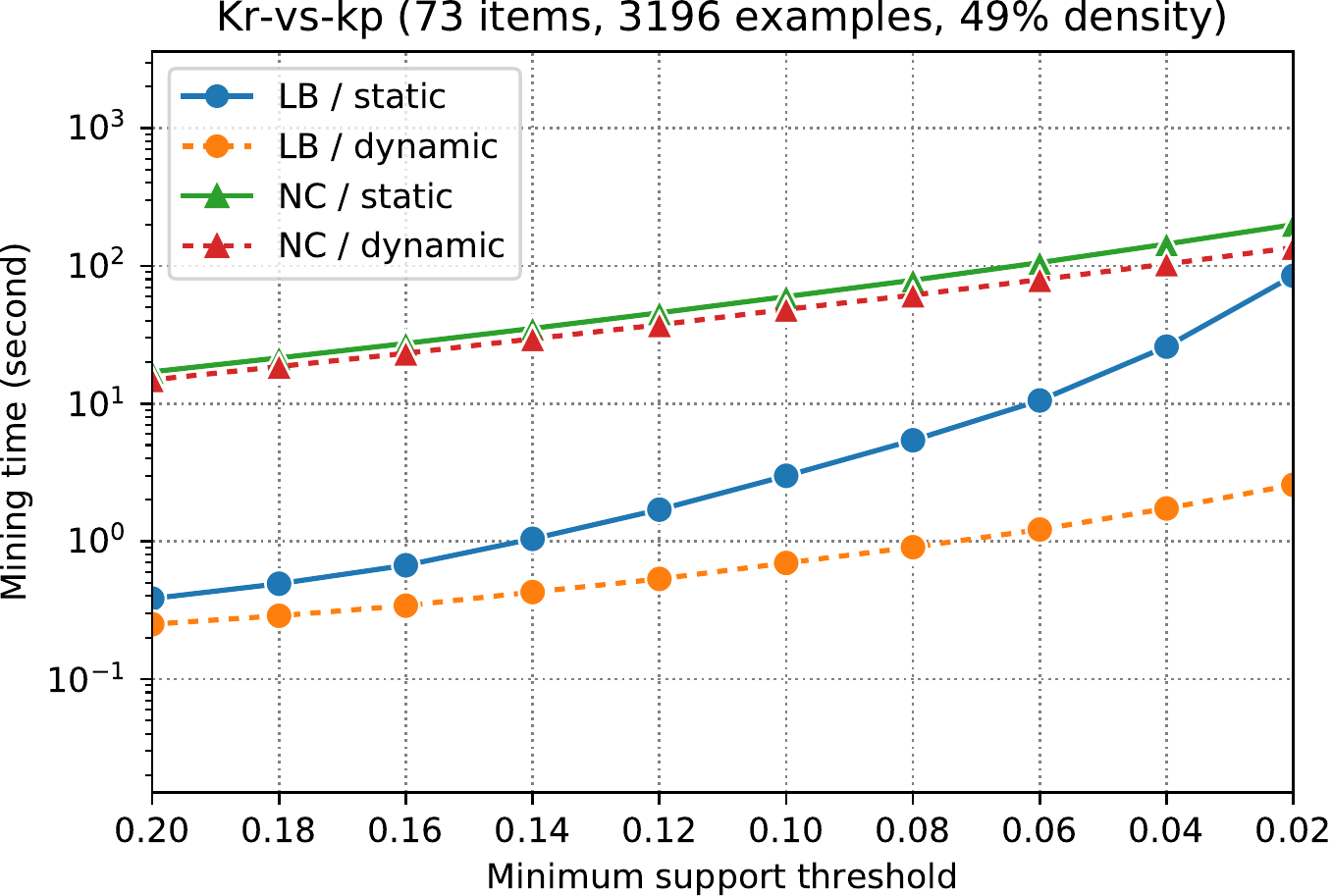}\hfill
\includegraphics[width=.32\columnwidth]{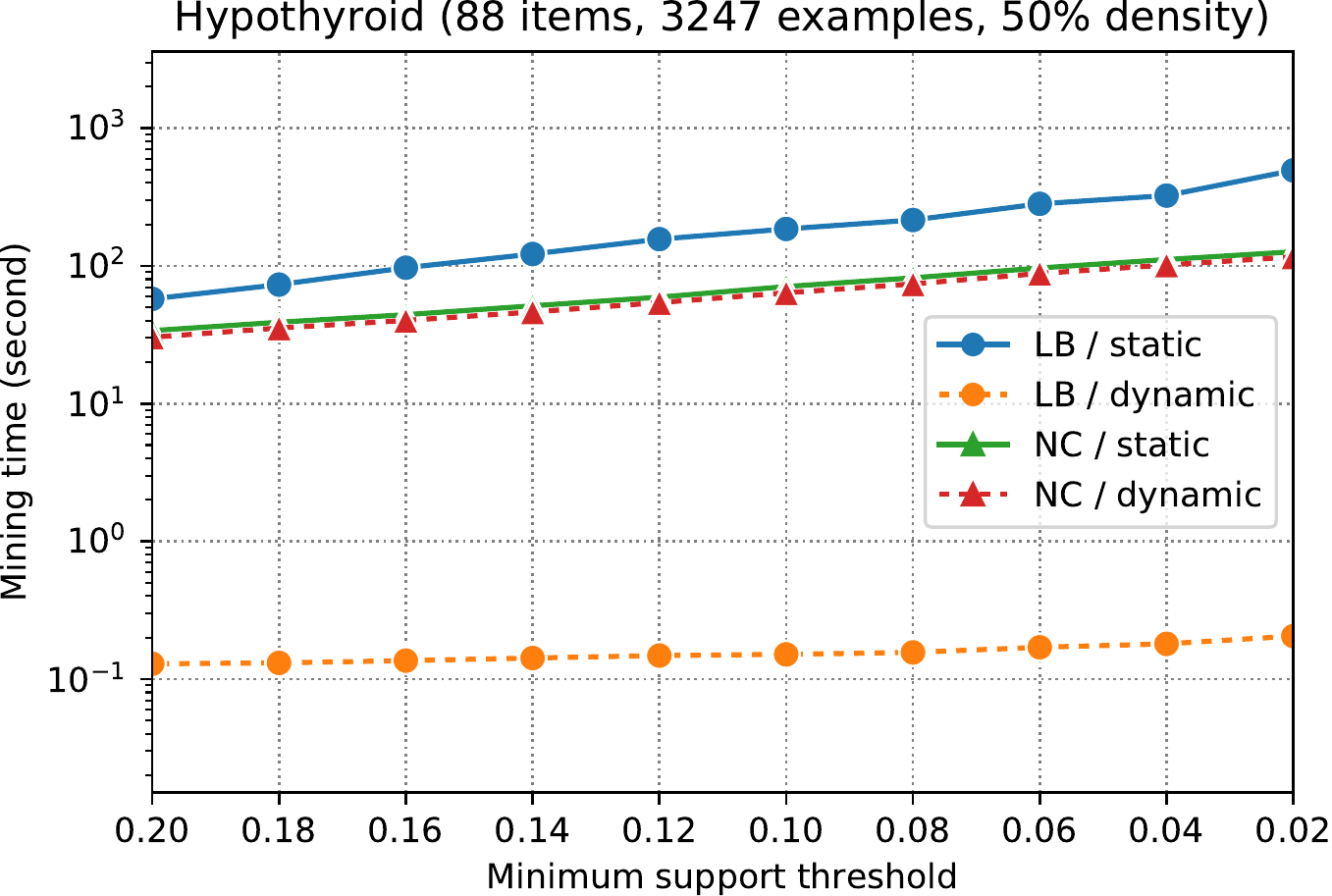}\hfill
\includegraphics[width=.32\columnwidth]{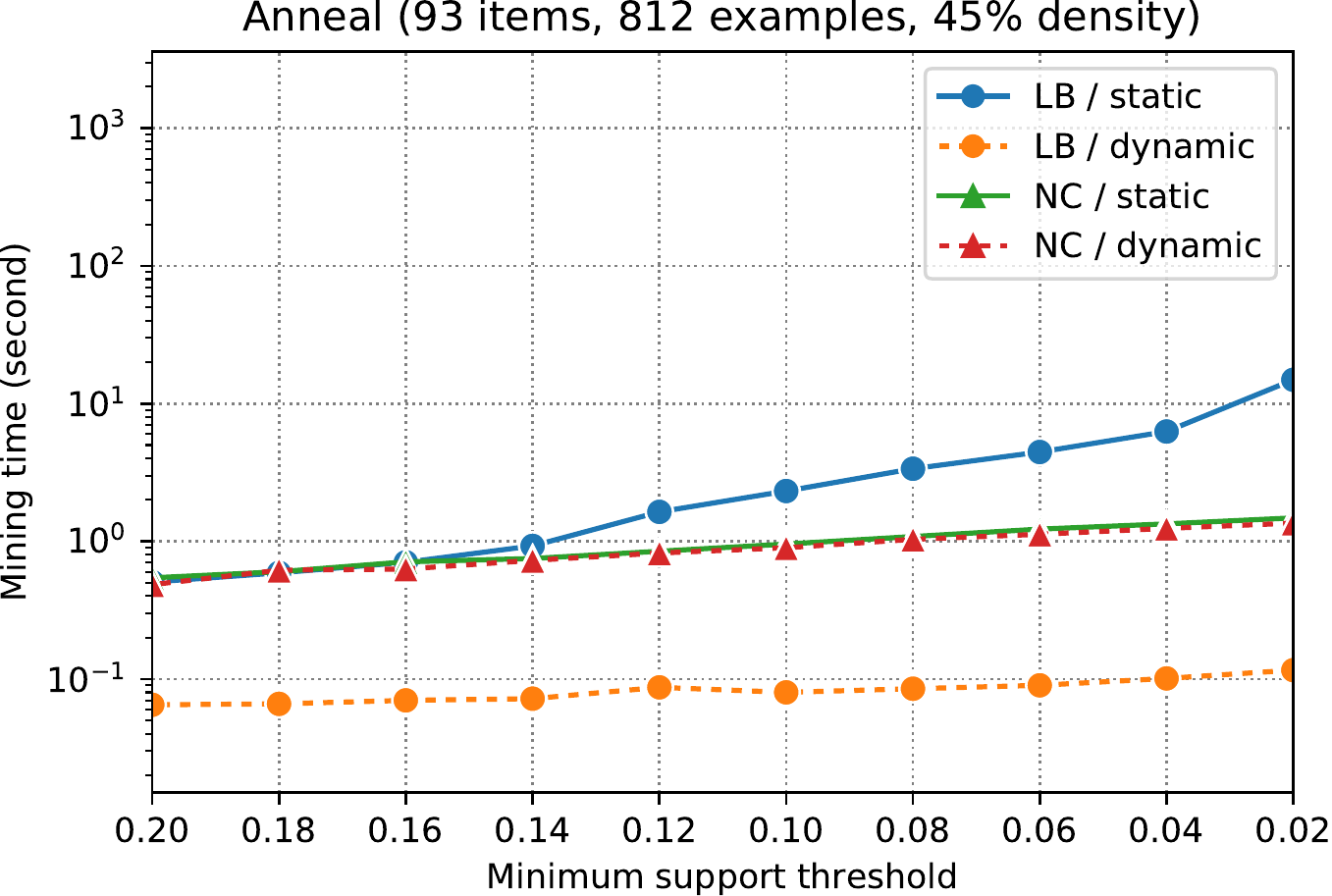}\\\medskip
\includegraphics[width=.32\columnwidth]{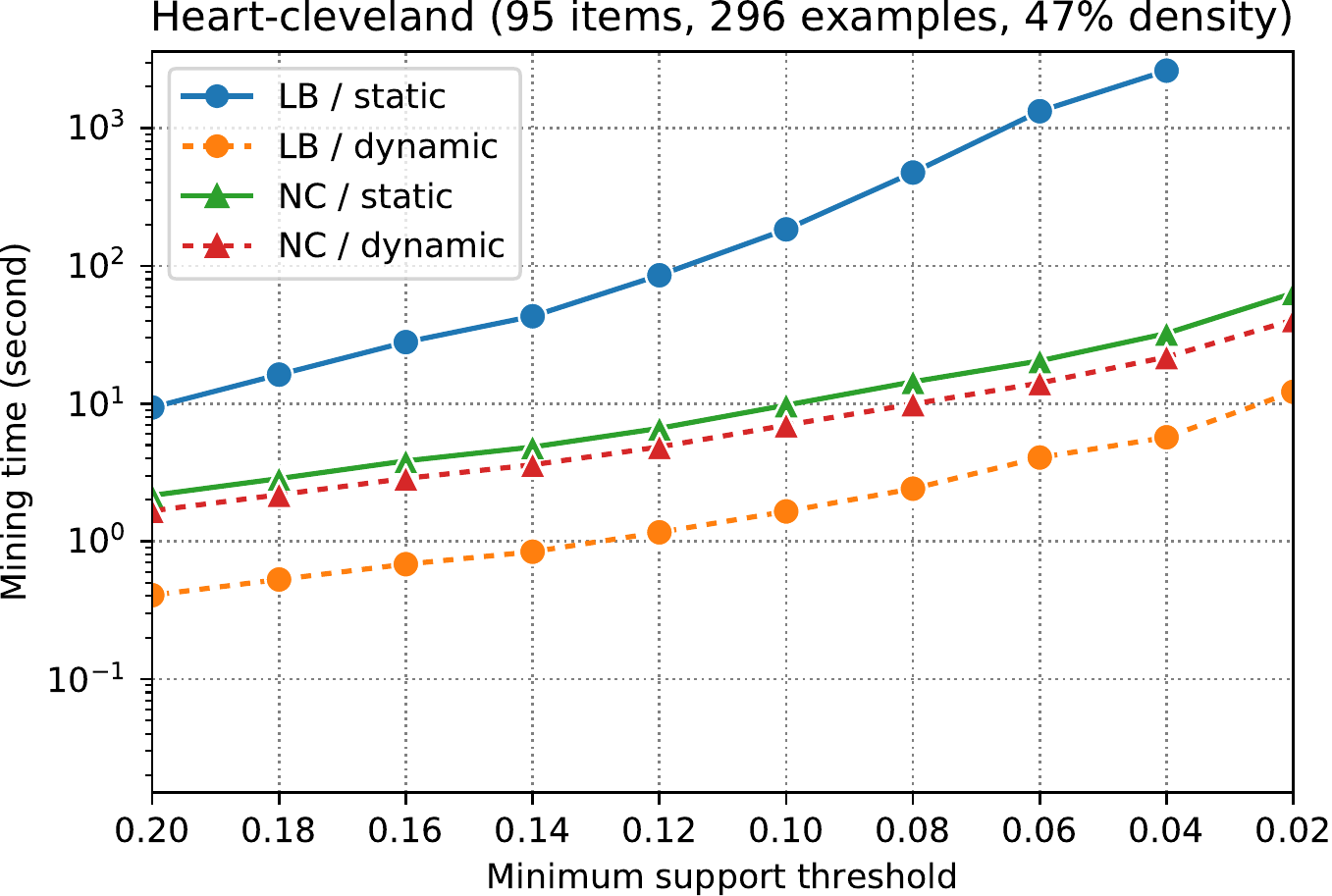}\hfill
\includegraphics[width=.32\columnwidth]{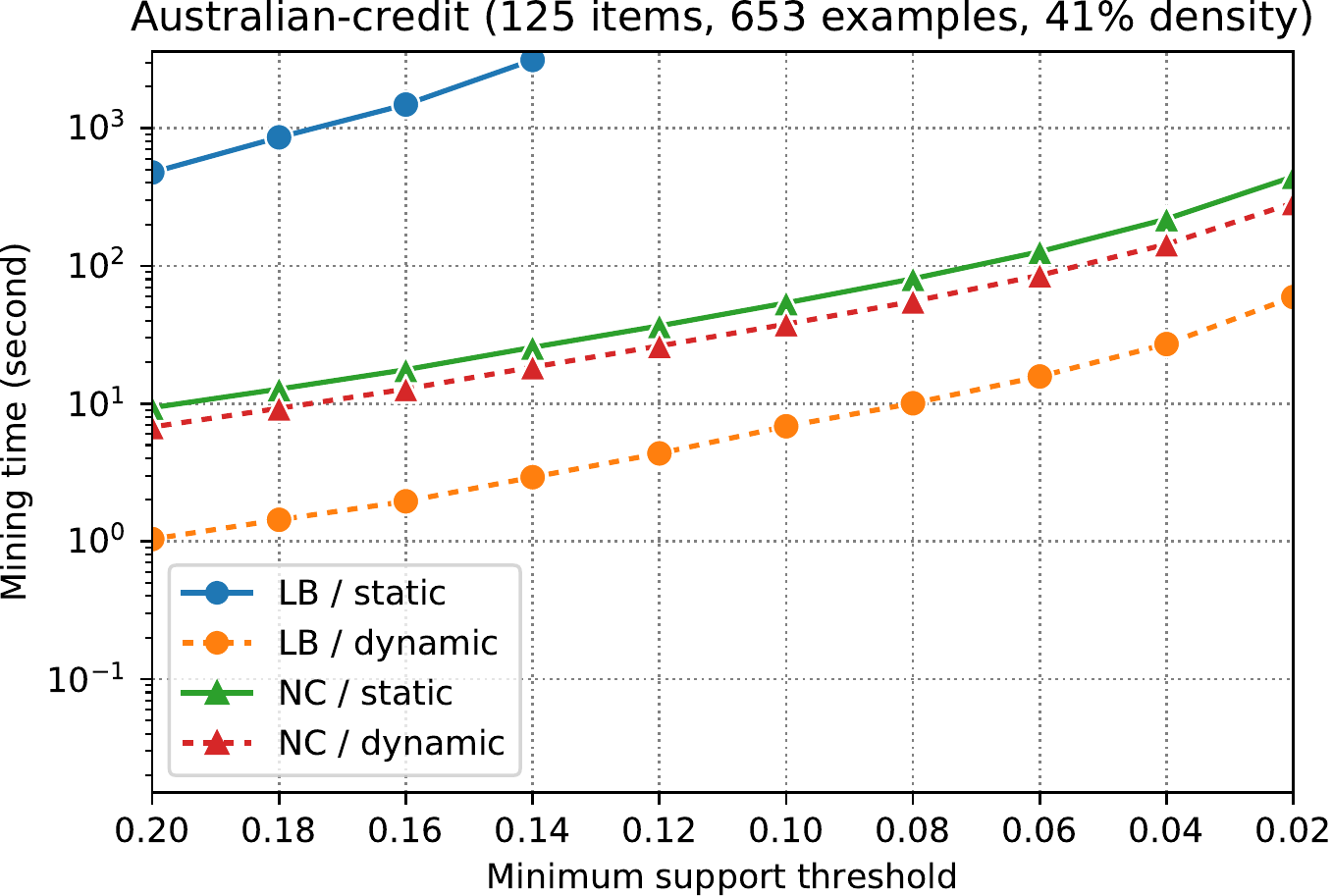}\hfill
\includegraphics[width=.32\columnwidth]{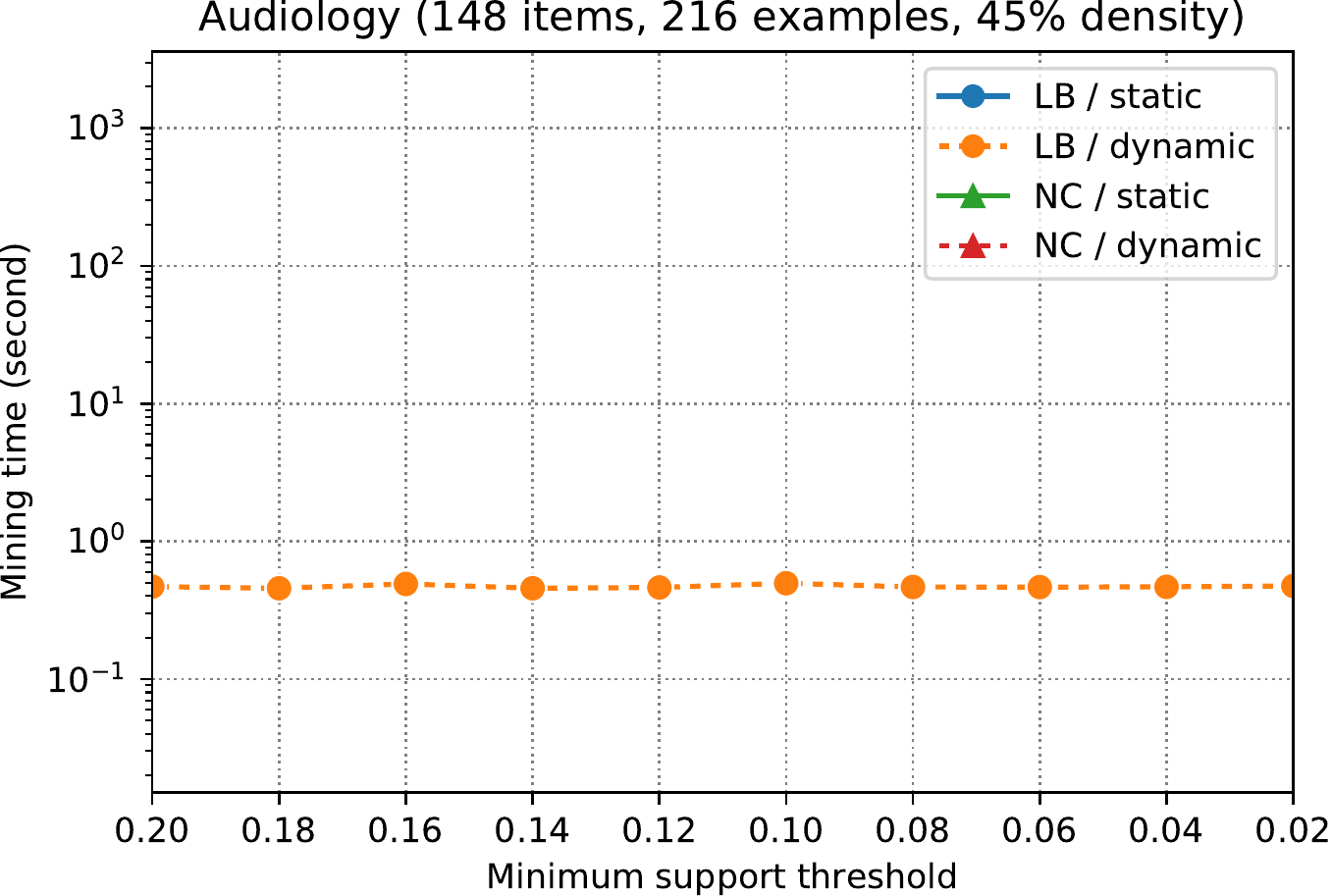}\\
\caption{Comparison of the mining time for minimal emerging patterns with $\theta=9$, using combinations of the pruning rule 2 (LB) or 3 (NC) and static or dynamic ordering.}
\label{fig:pruning-and-ordering}
\end{figure}

\begin{figure}[h!]\centering
\includegraphics[width=.32\columnwidth]{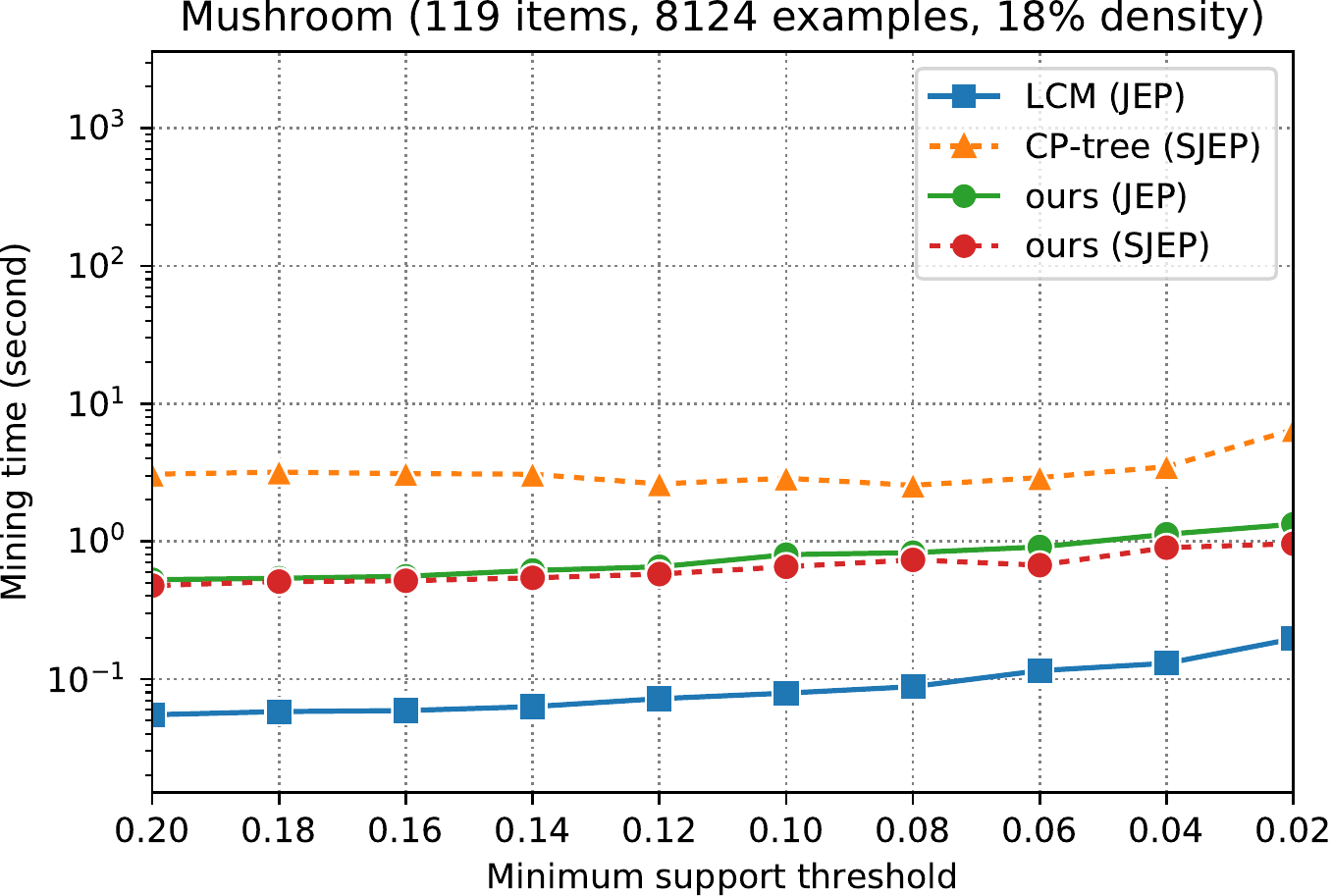}\hfill
\includegraphics[width=.32\columnwidth]{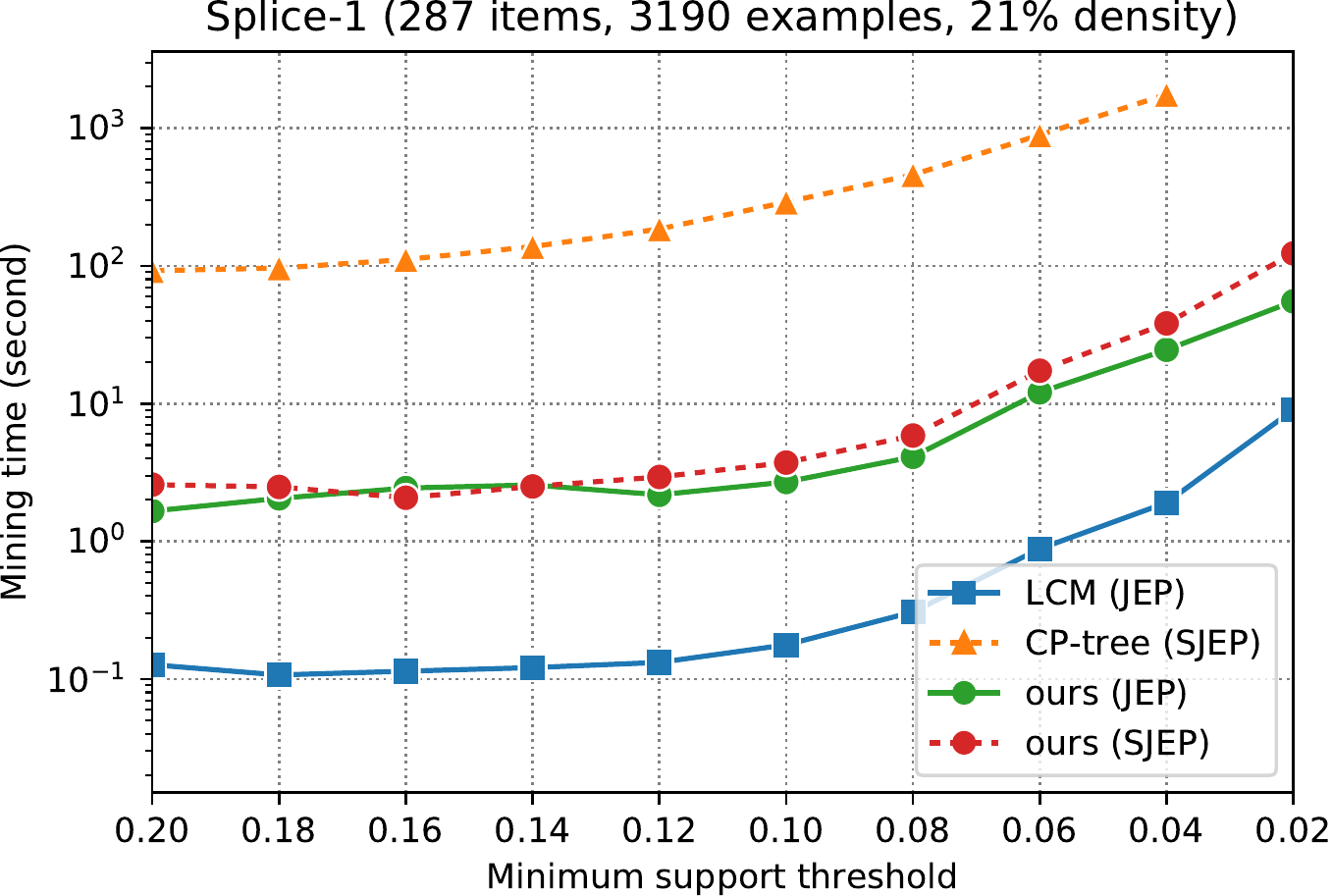}\hfill
\includegraphics[width=.32\columnwidth]{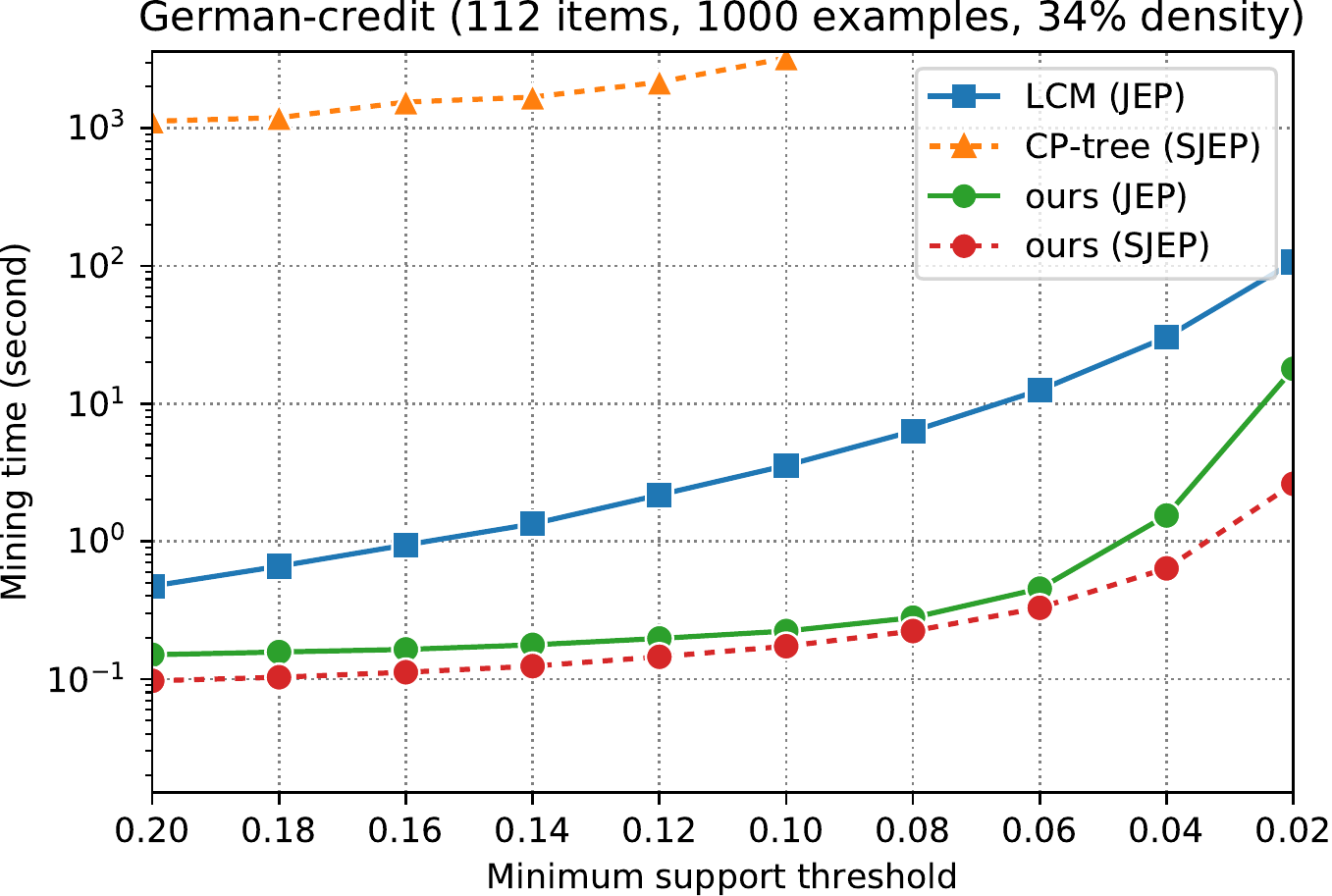}\\\medskip
\includegraphics[width=.32\columnwidth]{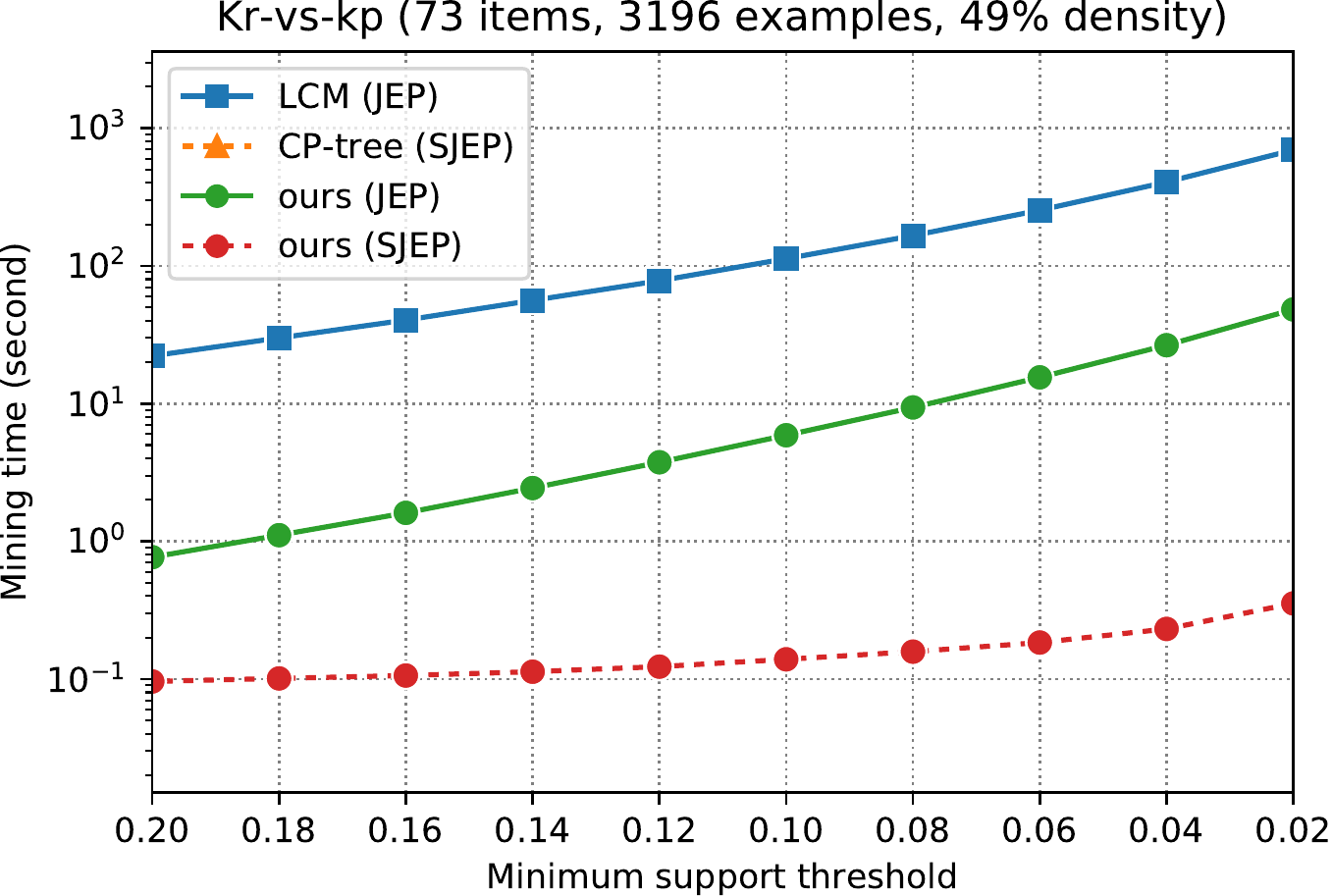}\hfill
\includegraphics[width=.32\columnwidth]{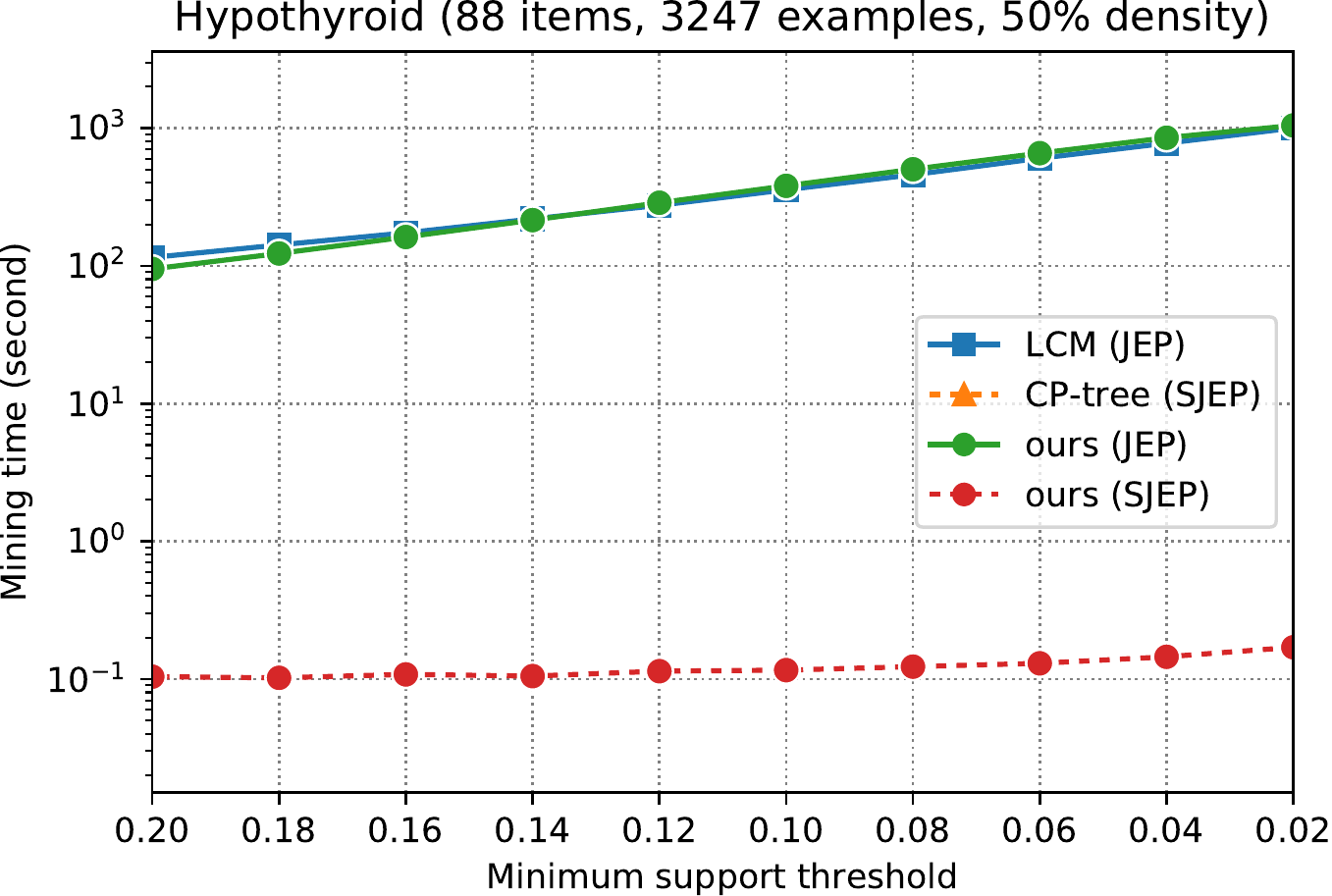}\hfill
\includegraphics[width=.32\columnwidth]{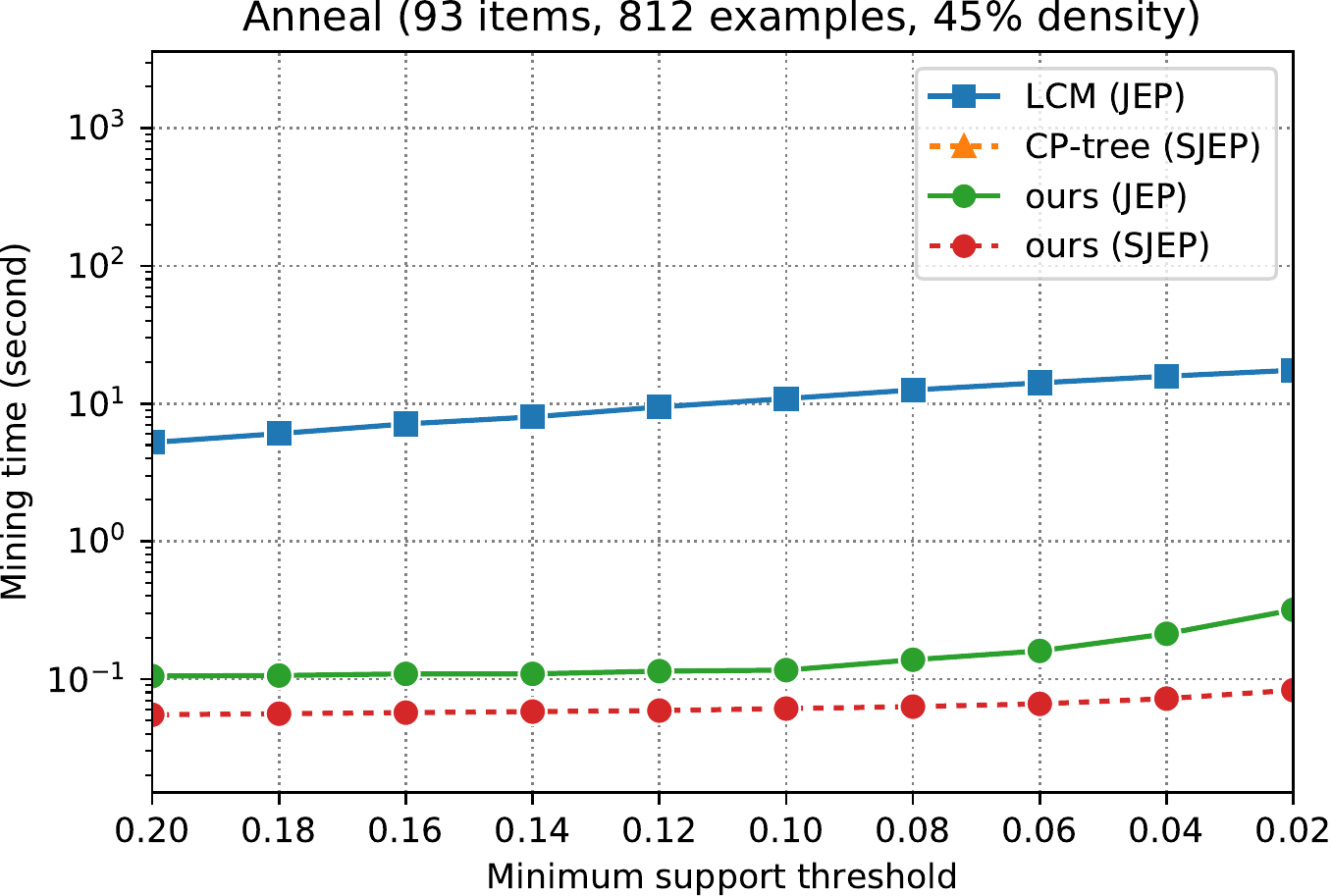}\\\medskip
\includegraphics[width=.32\columnwidth]{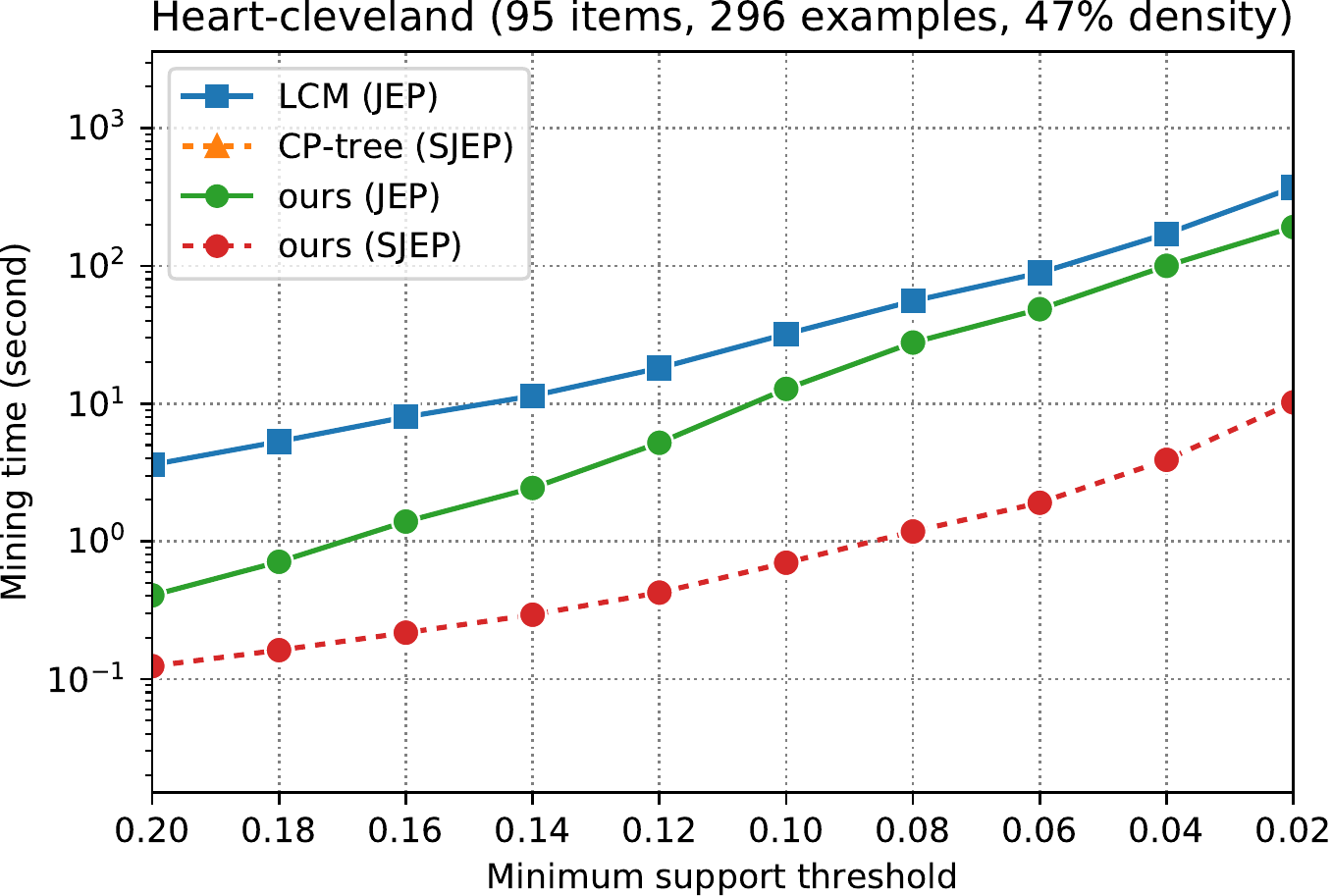}\hfill
\includegraphics[width=.32\columnwidth]{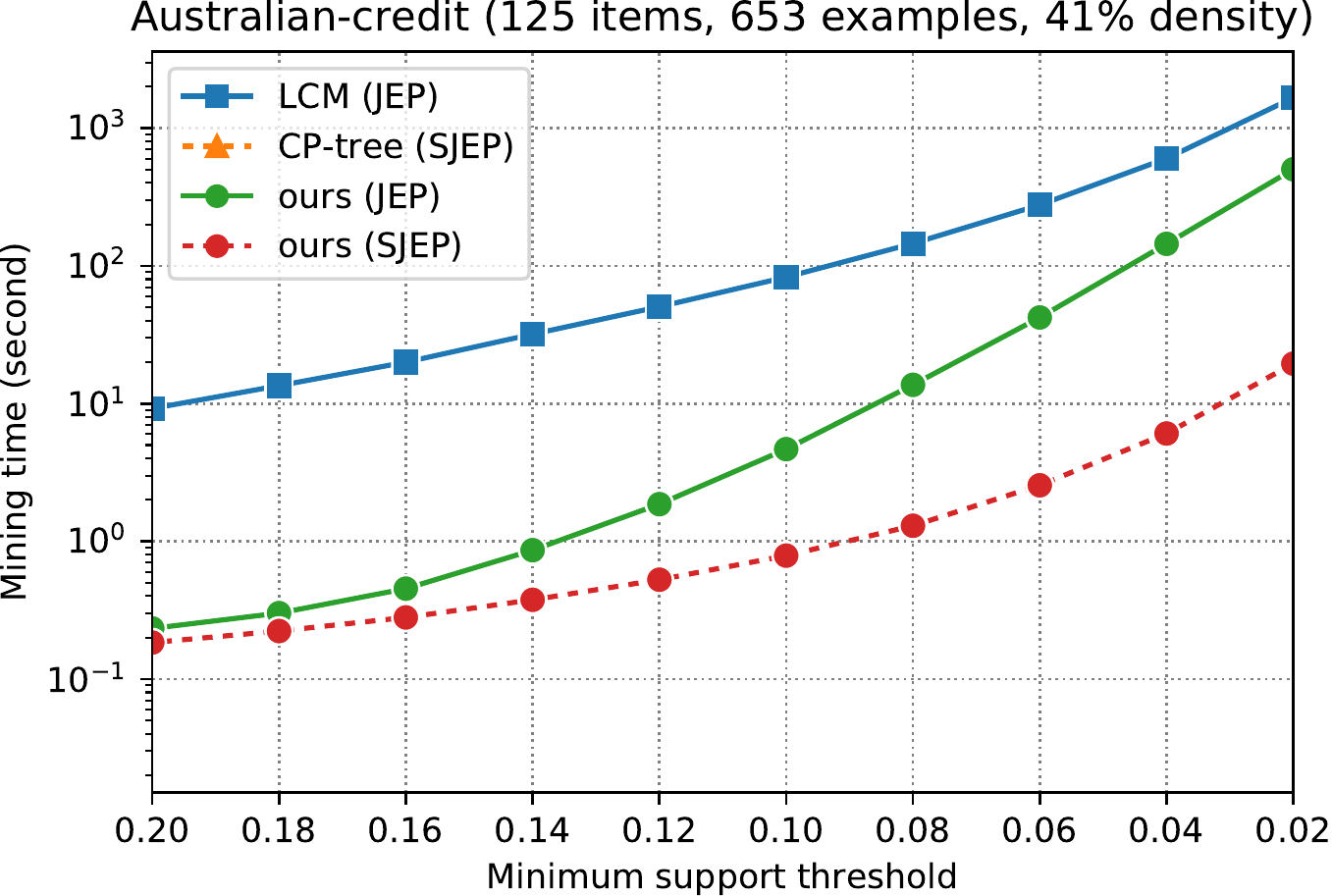}\hfill
\includegraphics[width=.32\columnwidth]{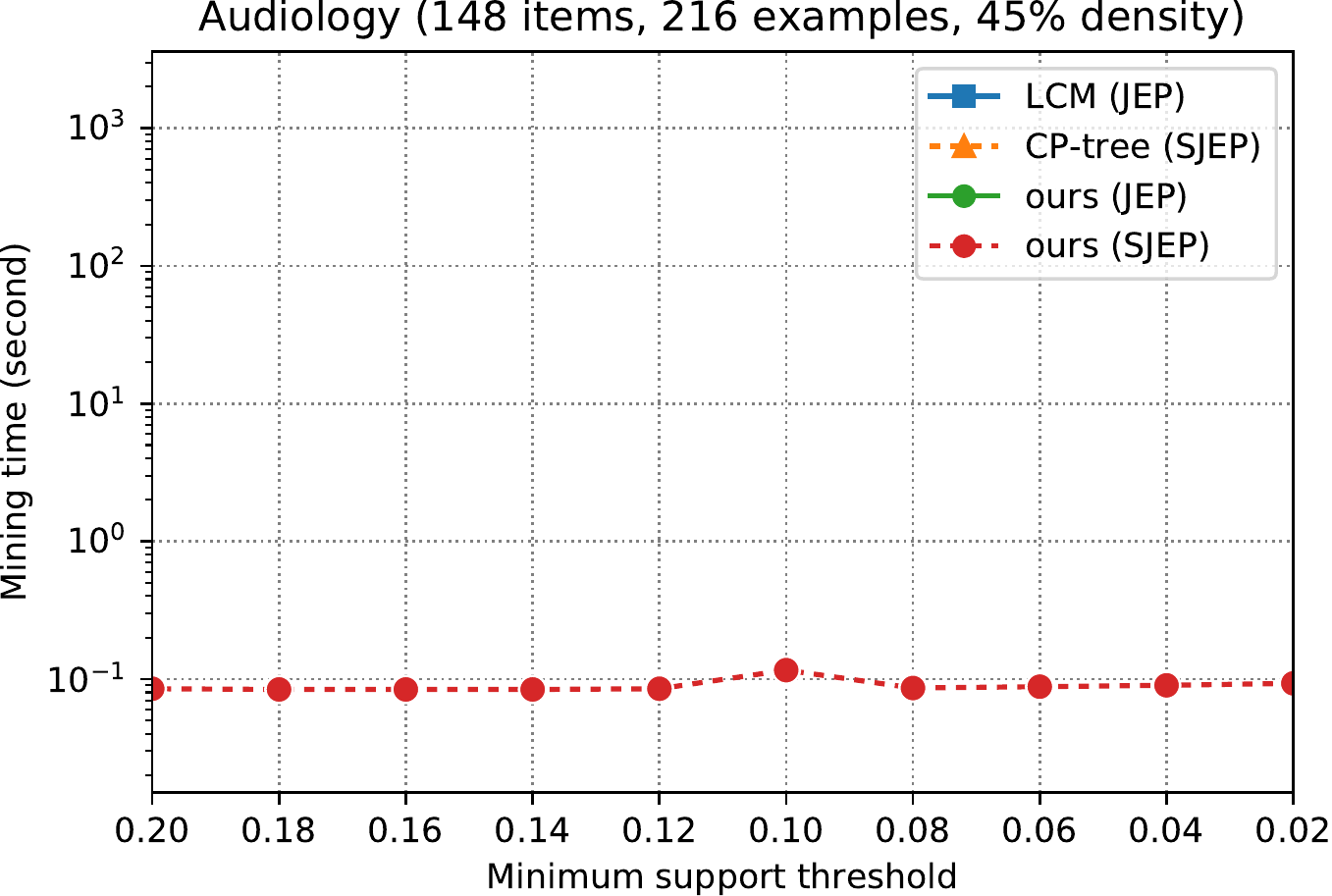}\\
\caption{Comparison of the mining time for JEPs and SJEPs, using LCM, CP-tree, and our algorithm.}
\label{fig:vs-cptree-and-lcm}
\end{figure}

Comparison of mining time for minimal emerging patterns is shown in Figure~\ref{fig:pruning-and-ordering}, where ``LB'' uses the pruning rules 1 and 2, and ``NC'' uses the pruning rules 1 and 3.
Growth rate constraint is fixed to $\theta=9$ in the experiments.
On the Audiology dataset, mining could not be finished within 3600 seconds without the combination of LB and dynamic ordering.
On the dense datasets, effectiveness of LB is improved dramatically when it is combined with dynamic ordering.

\subsection{Experiments 2: Performance comparison with existing methods}

In this experiment, we investigate the speed difference by mining the jumping emerging patterns from the proposed method and the existing methods (LCM~\cite{Uno_FIMI2004} and CP-tree~\cite{Fan_TKDE2006}).
LCM proposed by Uno et al.\ is a state-of-the-art algorithm that won the FIMI 2004 competition with closed frequent itemset mining. 
We used LCM version 5.3\footnote{\url{http://research.nii.ac.jp/~ uno/codes.htm}}, which can mine JEPs by setting large negative weights to the negative data.
The CP-tree proposed by Fan manages the pattern frequency by a tree structure and high-speed mining by reducing access to the database.
We used a C++ implementation of the CP-tree algorithm that mines SJEPs.

The results are shown in Figure~\ref{fig:vs-cptree-and-lcm}.
CP-tree could not complete mining within 3600 seconds on dense datasets.
We can see that LCM can perform JEP mining, which is more expensive than SJEP mining, orders of magnitude faster than the traditional CP-tree algorithm.
On the Audiology dataset, only the SJEP version of our algorithm could be finished within 3600 seconds.
The JEP version of our algorithm sometimes completed orders of magnitudes faster than LCM and the SJEP version was always faster than others in dense datasets.

\subsection{Experiments 3: Evaluation of classification model using mined patterns}

\begin{table}[ht!]
\label{table:f1-score}
\caption{Comparison of F-value between proposed method and existing method.}
\small
\begin{tabularx}{\columnwidth}{cXXXXX}
                       & Proposed method (with negative items) & Proposed method (without negative items) & Decision trees            & Logistic regression      & Random forests      \\ \hline
banknote-class-1        & \textbf{0.998} & 0.992          & 0.989          & 0.991          & 0.994          \\ \hline
breast-tissue-class-adi & \textbf{1.000} & \textbf{1.000} & 0.935          & 0.931          & 0.971          \\
breast-tissue-class-car & \textbf{0.937} & 0.863          & 0.891          & 0.894          & 0.931          \\
breast-tissue-class-con & \textbf{1.000} & \textbf{1.000} & 0.891          & 0.740          & 0.900          \\
breast-tissue-class-fad & \textbf{0.806} & 0.722          & 0.599          & 0.673          & 0.535          \\
breast-tissue-class-gla & \textbf{0.881} & \textbf{0.881} & 0.771          & 0.700          & 0.727          \\
breast-tissue-class-mas & \textbf{0.832} & 0.770          & 0.523          & 0.482          & 0.474          \\ \hline
glass-class-1           & 0.802          & 0.800          & 0.733          & 0.716          & \textbf{0.830} \\
glass-class-2           & \textbf{0.867} & 0.863          & 0.777          & 0.599          & 0.799          \\
glass-class-3           & \textbf{0.697} & 0.625          & 0.558          & 0.231          & 0.371          \\
glass-class-5           & 0.920          & \textbf{0.960} & 0.777          & 0.658          & 0.865          \\
glass-class-6           & \textbf{1.000} & \textbf{1.000} & 0.960          & 0.920          & \textbf{1.000} \\
glass-class-7           & 0.942          & \textbf{0.966} & 0.900          & 0.915          & 0.915          \\ \hline
iris-class-setosa       & \textbf{1.000} & \textbf{1.000} & \textbf{1.000} & \textbf{1.000} & \textbf{1.000} \\
iris-class-versicolor   & \textbf{0.962} & 0.916          & 0.948          & 0.730          & 0.949          \\
iris-class-virginica    & 0.949          & 0.943          & 0.952          & \textbf{0.971} & 0.952          \\ \hline
wifi-class-1            & 0.993          & 0.993          & 0.989          & 0.989          & \textbf{0.997} \\
wifi-class-2            & \textbf{0.982} & 0.961          & 0.979          & 0.977          & 0.978          \\
wifi-class-3            & 0.974          & 0.943          & 0.953          & 0.598          & \textbf{0.975} \\
wifi-class-4            & \textbf{0.995} & 0.956          & 0.991          & 0.994          & \textbf{0.995} \\ \hline
yeast-class-CYT         & 0.632          & 0.604          & 0.604          & 0.606          & \textbf{0.650} \\
yeast-class-ERL.        & \textbf{1.000} & \textbf{1.000} & 0.647          & 0.867          & 0.167          \\
yeast-class-EXC         & \textbf{0.661} & 0.536          & 0.589          & 0.530          & 0.654          \\
yeast-class-ME1         & \textbf{0.785} & 0.734          & 0.761          & 0.641          & 0.779          \\
yeast-class-ME2         & \textbf{0.591} & 0.420          & 0.485          & 0.430          & 0.483          \\
yeast-class-ME3         & \textbf{0.823} & 0.810          & 0.793          & 0.768          & 0.811          \\
yeast-class-MIT         & 0.634          & 0.589          & 0.614          & 0.590          & \textbf{0.645} \\
yeast-class-NUC         & 0.630          & 0.545          & 0.605          & 0.590          & \textbf{0.634} \\
yeast-class-POX         & \textbf{0.628} & 0.614          & 0.614          & 0.614          & 0.560    \\ \hline

\end{tabularx}
\end{table}

% \begin{figure}[h!]
% \centering
% \includegraphics[width=.95\columnwidth]{}
% \caption{The average F-values with 5-fold cross validation of each method. The target class names for the binary classification are shown in the horizontal axis.}
% \label{fig:f1-score}
% \end{figure}

% \begin{figure}[h]
% \centering
% \includegraphics[width=12cm]{}
% \caption{F-values of each method for Banknote dataset, Breast Tissue dataset, Glass dataset, and Iris dataset. The target class names for classification are shown in the horizontal axis.}
% \label{fig:f1-score-01}
% \end{figure}

% \begin{figure}[h]
% \centering
% \includegraphics[width=12cm]{}
% \caption{F-values of each method for Wireless Indoor Localization (called Wifi for short) dataset and Yeast dataset. The target class names for classification are shown in the horizontal axis.}
% \label{fig:f1-score-02}
% \end{figure}

In this experiment, we compare the classification model using patterns mined by the proposed method with existing models.
Our model is a generalized additive linear model learned by the LASSO algorithm, which has mined patterns with a maximum length of 5 as features.
We performed binary classification problems on various datasets and compared F-values with existing methods (Logistic Regression, Decision Tree, and Random Forest) by the 5-fold cross validation.
We used binarized data with MDLP for learning of our method, and used original real-valued data for existing methods.
Here we considered two types of binarized data where one introduces negation and the other does not.
Both ours and existing methods tuned hyperparameters with Optuna~\footnote{https:\slash\slash{}optuna.readthedocs.io\slash{}en\slash{}stable\slash}.
We show the used datasets in Table~\ref{tbl:dataset-pred}.

We show the comparison results in Fig~\ref{table:f1-score}.
The results attract attention to that our model shows superior performance in all datasets.
In addition, the following two characteristical results are observed.
(i) Our model achieves high F-values for the datasets causing inferior performances of existing methods, for example the breast-class-mas data and the glass-class-3 data.
(ii) Our model tends to make high performance with introducing negation in the binarized data, for example the yeast-class-EXC data and the yeast-class-ME2 data.

% 我々の手法は MDLP で2値化処理をしたデータを用いた。
% 対抗手法 (Logistic Regression, Decision Tree, Random Forest) は生の実数値データを用いた。
% 各種設定は以下の通り

% \begin{itemize}
% \item 各データセットにおいて、各サンプルがあるクラスに該当するか否かを分類する2値分類問題を各クラスごとに作成した
% \item Ours は提案手法で得たパターン集合を用いて Logistic Regression を行う
% \item MDLP は各特徴量$f$に対して区切りとなる値$c^f_1, c^f_2, \ldots, c^f_k$を見つけてくる
% \item Ours (without negation): 値の区間$c^f_i \leq val < c^f_{i+1}$で2値化したデータを用いる
% \item Ours (with negation): 値の大小$val < c^f_i$と$c^f_i \leq val$で2値化したデータを用いる
% \item Ours の共通事項： パターン長は最大5. マイニング時に Confidence と Chi-square value の下限を設定、Support は考慮しない
% \item Ours と対抗手法のいずれも、ハイパーパラメータを Optuna (https:\slash\slash{}optuna.readthedocs.io\slash{}en\slash{}stable\slash) でチューニングした
% \end{itemize}
% データセットは Table \ref{tbl:dataset-pred} で、
% 結果は Fig. \ref{fig:f1-score-01} と Fig. \ref{fig:f1-score-02} に分けている。
% 結果のうち特徴的と思われる箇所は以下の通り
% \begin{itemize}
% \item breast-class-mas や glass-class-3 など、対抗手法が軒並み精度が出ないデータセットで Ours の精度が図抜けている
% \item glass-class-3 や wifi-class3 など、シンプルな Logistic Regression では精度が出ないデータセットで Ours は Random Forest と同等の精度を達成している
% \item breast-class-fad や yeast-classME2 などから Ours は Negation を入れたほうが精度が上がりやすいことがわかる
% \item Ours が対抗手法に大きく負けたデータセットがない
% \end{itemize}

\section{Conclusions}
\label{sec:conc}
In this paper, we consider the problem of constrained pattern mining. We propose dynamic variable-ordering during pattern search, and using dancing links data structures. By computational experiments on real datasets, we observed that our algorithm outperformed the existing algorithms for dense databases.

%% BIB %%%%%%%%%%%%%%%%%%%%%%%%%%

\bibliographystyle{abbrv}
%\bibliographystyle{plain}
%%\bibliography{ref_wl}
\bibliography{ref_wl}
%% %% APPENDIX %%%%%%%%%%%%%%%%%%%%%%%%%%
%\newpage
%\appendix
%\input{appendix}

\end{document}